\title{Entropy, majorization and thermodynamics 
in general probabilistic theories}
\author{Howard Barnum
\institute{University of New Mexico \\ Albuquerque, NM}
\email{hnbarnum@aol.com}
\and
Jonathan Barrett \qquad\qquad 
\institute{Department of Computer Science, University of Oxford\\
Oxford, U.K.}
\email{\quad jonathan.barrett@cs.ox.ac.uk}
\and
Marius Krumm
\institute{Department of Theoretical Physics, University of Heidelberg, 
Heidelberg, Germany}
\email{\quad M.Krumm@stud.uni-heidelberg.de}
\and
Markus P.\ M\"uller
\institute{Department of Theoretical Physics, University of Heidelberg, 
Heidelberg, Germany}
\institute{Department of Applied Mathematics, University of Western Ontario, London ON, Canada}
\institute{Department of Philosophy, University of Western Ontario, London ON, Canada}
\institute{Perimeter Institute for Theoretical Physics, Waterloo, Canada}
\email{\quad markus@mpmueller.net}
}
\newcommand{\face}{\mbox{face}}
\newcommand{\im}{\mbox{im}}
\newtheorem{theorem}{Theorem}[section]
\newtheorem{lemma}[theorem]{Lemma}
\newtheorem{proposition}[theorem]{Proposition}
\newtheorem{corollary}[theorem]{Corollary}
\newtheorem{definition}[theorem]{Definition}
\newtheorem{question}{Question}
\renewcommand{\hat}{\widehat}
\renewcommand{\v}{{\bf v}}
\newcommand{\w}{{\bf w}}
\newcommand{\x}{{\bf x}}
\newcommand{\p}{{\bf p}}
\newcommand{\tr}{\text{tr\,}}
\renewcommand{\phi}{\varphi}
\newcommand{\beq}{\begin{equation}}
\newcommand{\eeq}{\end{equation}}
\newcommand{\beqa}{\begin{eqnarray}}
\newcommand{\eeqa}{\end{eqnarray}}
\def\qed{{\hfill $\blacksquare$}}
\newcommand{\cP}{{\cal P}}
\def\R{{\mathbb{R}}}
\def\lambdad{\lambda^{\downarrow}}
\def\lambdau{\lambda^{\uparrow}}
\newcommand{\ket}[1]{| #1 \rangle}
\newcommand{\bra}[1]{\langle #1 |}
\newcommand{\proj}[1]{\ket{#1}\! \bra{#1}}
\def\join{\vee}
\def\union{\cup}
\def\intersect{\cap}
\def\face{{\rm Face}}
\def\conv{{\rm Conv~}}
\def\ker{{\rm ker~}}
\def\lin{{\mathrm{lin~}}}
\newcommand{\hbcommentoff}[1]{{}}
\newcommand{\tempout}[1]{{}}
\begin{document}
\maketitle

\begin{abstract}
In this note we lay some groundwork for the resource theory of
thermodynamics in general probabilistic theories (GPTs). We consider
theories satisfying a purely convex abstraction of the spectral
decomposition of density matrices: that every state has a
decomposition, with unique probabilities, into perfectly
distinguishable pure states. The spectral entropy, and analogues using
other Schur-concave functions, can be defined as the entropy of these
probabilities. We describe additional conditions under which the
outcome probabilities of a fine-grained measurement are majorized by
those for a spectral measurement, and therefore the ``spectral entropy"
is the measurement entropy (and therefore concave). These conditions
are (1) projectivity, which abstracts aspects of the L\"uders-von
Neumann projection postulate in quantum theory, in particular that
every face of the state space is the positive part of the image of a
certain kind of projection operator called a filter; and (2) symmetry
of transition probabilities. The conjunction of these, as shown
earlier by Araki, is equivalent to a strong geometric property of the
unnormalized state cone known as perfection: that there is an inner
product according to which every face of the cone, including the cone
itself, is self-dual. Using some assumptions about the thermodynamic
cost of certain processes that are partially motivated by our
postulates, especially projectivity, we extend von Neumann's argument
that the thermodynamic entropy of a quantum system is its spectral
entropy to generalized probabilistic systems satisfying spectrality.

\end{abstract}

Much progress has recently been made (see for example
\cite{SecondLaws, LJR}) in understanding the fine-grained
thermodynamics and statistical mechanics of microscopic quantum
physical systems, using the fundamental idea of thermodynamics as a
particular type of \emph{resource theory}.  A resource theory is a
theory that governs which state transitions, whether deterministic or
stochastic, are possible in a given theory, using specified means.
This depends on the kind of state transformations that are allowed in
the theory, and in particular on which subset of them are specified as
``thermodynamically allowed''.

In this note, we lay some groundwork for the resource theory of
thermodynamics in general probabilistic theories (GPTs).  We describe
simple, but fairly strong, postulates under which the structure of
systems in a theory gives rise to a natural notion of spectrum,
allowing definition of entropy-like quantities and relations such as
majorization analogous to those involved in fine-grained quantum and
classical thermodynamics.  These are candidates for governing which
transitions are thermodynamically possible under specialized
conditions (such as a uniform (``microcanonical'') bath), and for
figuring in definitions of free-energy-like quantities that constrain
state transitions under more general conditions. In further
work~\cite{InPreparation} we will investigate the extent to which they
do so, under reasonable assumptions about which transformations are
thermodynamically allowed.  Since the postulates are shared by quantum
theory, and the spectrum of quantum states is deeply involved in
determining which transitions are possible, we expect that these
postulates, supplemented with further ones (including a notion of
energy), allow the development of a thermodynamics and statistical
mechanics fairly similar to the quantum one.

Our main result, proven in Section \ref{sec: majorization}, is that
under certain assumptions, implied by but weaker than the conjunction
of Postulates 1 and 2 of \cite{BMU}, the outcome probabilities of a
finegrained measurement are majorized by those for a spectral
measurement, and therefore the ``spectral entropy'' is the measurement
entropy (and therefore concave).  It also allows other entropy-like
quantities, based on Schur-concave functions, to be defined.  Our
first assumption, described in Section \ref{sec: spectrality} is a
purely convex abstraction of the spectral decomposition of density
matrices: that every state has a decomposition, with unique
probabilities, into perfectly distinguishable pure (i.e. extremal)
states.  The spectral entropy (and analogues using other Schur-concave
functions) can be defined as the entropy of these probabilities.
Another assumption, \emph{projectivity} (Section \ref{subsec: projectivity}), 
abstracts aspects of the projection postulate in quantum theory; 
together with \emph{symmetry of transition probabilities} it ensures 
the desirable behavior of the spectral entropic quantities that follows
from our main result.  

In Sections 5 and 6 we note that projectivity on its own implies a
spectral expansion for \emph{observables} (our additional spectrality
assumption is for \emph{states}), and also note the equivalence of the
premises of our theorem on spectra to a strong kind of self-duality,
known as \emph{perfection}, of the state space.  

Section 7 contains another main result of this work.  Using
spectrality, and some assumptions about the thermodynamic cost of
certain processes that are partially motivated by our other
postulates, especially projectivity, we generalize von Neumann's
argument that the thermodynamic entropy of a quantum system is its
spectral entropy, to generalized probabilistic systems satisfying
spectrality.  We then consider the prospect of embedding this result
in a broader thermodynamics of systems satisfying relevant properties
including the ones used in the present work, as well as others.  Among
the other useful properties, \emph{Energy Observability}, which was
used in \cite{BMU} to narrow down the class of Jordan algebraic
theories to standard complex quantum theory, can provide a
well-behaved notion of energy to play a role in a fuller thermodynamic
theory, and an ample automorphism group of the normalized space,
acting transitively on the extremal points, or even strongly, on the
sets of mutually distinguishable pure states (Strong Symmetry
(\cite{BMU}), may enable reversible adiabatic processes that can be
crucial to thermodynamic protocols.

While our postulates are strong and satisfied by quantum theory, it is
far from clear that, even supplemented by energy observability, they
constrain us to quantum theories: in \cite{BMU} the strong property of
no higher-order interference was used, along with the properties of
Weak Spectrality, Strong Symmetry, and Energy Observability, to obtain
complex quantum theory as the unique solution.  While it is possible that latter three
properties alone imply quantum theory, this would be a highly
nontrivial result and we consider it at least as likely that they do
not.

In the special case of assuming Postulates 1 and 2 of~\cite{BMU}, a
proof of our main theorem (Theorem~\ref{theorem: majorization}) has
appeared in one of the authors' Master thesis~\cite{Krumm}, where
several further results have been obtained.  We will elaborate on
this, and in particular on the physics as detailed in von Neumann's
thought experiment (cf.\ Section~\ref{SecvonNeumann}),
elsewhere~\cite{InPreparation}. Note also the very closely related,
but independent work of Chiribella and
Scandolo~\cite{Giulio1,Giulio2}. The main difference to our work is
that (in most cases) they assume a ``purification postulate'' (among
other postulates), and thus rely on a different set of axioms than we
do.  General-probabilistic thermodynamics has also been considered
in~\cite{MDV,MOD}, where entanglement entropy and its role in the
black-hole information problem has been analyzed. We hope that these
different ways of approaching generalized thermodynamics will help to
identify the main features of a probabilistic theory which are
necessary for consistent thermodynamics, and thus lead to a different,
possibly more physical, understanding of the structure of quantum
theory.

\section{Systems}

In this section, we recall the general notion of system that we 
will use as an abstract model of potential physical systems in a theory, 
and define several properties of a system that in following sections
will be related to the existence of a spectrum.    
We make a standing assumption of finite dimension throughout the paper
except where it is explicitly suspended (notably in Appendix
\ref{appendix: spectral expansion}).

A system will be a triple consisting of a finite-dimensional regular
cone $A_+$ in a real vector space $A$, a distinguished regular cone
$A^\sharp_+\subseteq A^*_+$ ($A^*_+ \subset A^*$ being the cone dual
to $A_+$), and a distinguished element $u$ in the interior of
$A^\sharp_+$.  A (convex) \emph{cone} in a finite-dimensional real
vector space is a subset closed under addition and nonnegative scalar
multiplication; it is \emph{regular} if it linearly generates $A$,
contains no nontrivial subspace of $A$, and is topologically closed.
Usually we will refer to such a system by the name of its ambient
vector space, e.g. $A$.  The normalized states are the elements $x \in
A_+$ for which $u(x)=1$; the set $\Omega$ of such states is compact
and convex, and forms a base for the cone.  Measurement outcomes,
called \emph{effects}, are linear functionals $e \in A_+^\sharp$
taking values in $[0,1]$ when evaluated on normalized states; a
measurement is a (finite, for present purposes) set of effects that
add up to $u$.  Below, we will assume that $A^\sharp_+ = A^*_+$,
although we are investigating whether this can be derived from our
other assumptions.  Allowed dynamical processes on states will usually
be taken to be \emph{positive maps}: linear maps $T$ such that $TA_+
\subset A_+$.  Such a map is an \emph{order-automorphism} if $TA_+ =
A_+$, and \emph{reversible} if $T\Omega = \Omega$.  An
\emph{order-isomorphism} $T: A \rightarrow B$ between ordered vector
spaces is an isomorphism of vector spaces with $TA_+ = B_+$.

An \emph{extremal ray} of a cone $A_+$ is a ray $\rho = \R_+ x$, for
some nonzero $x \in A_+$, such that no $y \in \rho$ is a positive
linear combination of distinct nonzero elements of $A_+$ not in $\rho$.
Equivalently, it is the set of nonnegative multiples of an extremal
state (also called pure state) of $\Omega$.  (Extremal points in a
convex set are those that cannot be written as nontrivial convex
combinations of elements of the set.) A cone is \emph{reducible} if $A
= A_1 \oplus A_2$, a nontrivial vector space direct sum, and all
extremal rays of $A_+$ are contained either in $A_1$ or $A_2$, and
irreducible if it is not reducible.  Information about which of the
summands $A_i$ of a reducible cone a state lies in can be considered
essentially classical; $A_i$ are like ``superselection sectors''.

\section{Spectrality} 
\label{sec: spectrality}

We say a set $\{ \omega_i \}$ of states is 
\emph{perfectly distinguishable}
if there is a measurement $\{e_i \}$ such that $e_i(\omega_j) = \delta_{ij}$.

{\bf Axiom WS:} (``Weak Spectrality'')
Every state $\omega$ has a convex 
decomposition $\omega = \sum_i p_i \omega_i$ 
into perfectly distinguishable pure states.

{\bf Axiom S:} (``Spectrality'' (or ``Unique Spectrality'')).  Every
state has a decomposition $\sum_i p_i \omega_i$ into perfectly
distinguishable pure states.  If it has more than one such
decomposition, the two use the same probabilities.  In other words, if
$\omega = \sum_{i=1}^N p_i \omega_i = \sum_{i= 1}^M q_i \rho_i$, where
both $\omega_i$ and $\rho_i$ are sets of perfectly distinguishable
pure states, $p_i, q_i > 0$, $\sum_i p_i = \sum_i q_i = 1$, then $M=N$
and there is a permutation $\sigma$ of $\{1,...,N\}$ such that for
each $i$, $p_{\sigma(i)} = q_i$.

\emph{A priori} Axiom {\bf S} is stronger than Axiom {\bf WS}.  Later
in this note we will give an example of a weakly spectral, but not 
uniquely spectral, system.

Note that {\bf WS} is Postulate 1 of \cite{BMU}.  Postulate 2 of \cite{BMU} is
Strong Symmetry ({\bf SS}): that every set of mutually distinguishable pure
states can be taken to any other such set of the same size by a
reversible transformation (affine automorphism of $\Omega$).    
{\bf WS} and Strong Symmetry together imply Axiom {\bf S} and
Axiom {\bf P} (that is, ``projectivity'' as defined below).  The converse is probably not true.  Indeed, Postulates 1
and 2 of \cite{BMU} imply the very strong property of
\emph{perfection}, which, as we note in Section 
\ref{sec: projective stp perfect} is equivalent to Axioms 
{\bf S}, {\bf P}, and Symmetry of Transition Probabilities.  

There are various ways to \emph{derive} weak spectrality, as for 
example in \cite{Giulio2}, or \cite{Wilce10, Wilce09}.  

\section{Projective and perfect systems} 

\subsection{Projectivity}
\label{subsec: projectivity}

We call a finite-dimensional system \emph{projective} if each face of
$\Omega$ is the positive normalized part of the image of a \emph{filter}.
Filters are defined in \cite{BMU} to be normalized, bicomplemented,
positive linear projections $A \rightarrow A$.  This is equivalent to
being the dual of a \emph{compression}, where the latter is as defined in
\cite{ASBook}.  Normalization just means that they are contractive in
(do not increase) the base norm, which for $x \in A_+$ is just $u(x)$.
Recall that $P$ positive means 
$P A_+ \subseteq A_+$ and 
$P$ a projection means $P^2=P$.  We write $\im_+ P$ for $\im P \intersect A_+$, 
and $\ker_+ P$ for $\ker P \intersect A_+$. 
Then complemented means there is another positive projection $P'$ with 
$\im_+ P = \ker_+ P'$ and $\ker_+ P = \im_+ P'$, 
and bicomplemented means complemented with
complemented adjoint.  It can be shown that 
filters are neutral: if $u(x) = u(Px)$ (``$x$
passes the filter with certainty'') then $Px = x$ (``$x$ passes the
filter undisturbed'').  The complement, $P'$, is unique. 
The projections $P: X \mapsto Q X Q$ of
quantum theory, where $Q$ is an orthogonal projector onto a subspace of the
Hilbert space, are examples of filters.  The existence of filters 
might be important for informational protocols such as data compression,
or for thermodynamic protocols or the machinations of Maxwell demons.
In finite dimension, a system is projective in this sense
if and only if it satisfies the standing hypothesis of 
\cite{ASBook}, Ch. 8.  

A system is said to satisfy Axiom {\bf P} (``Projectivity'') if
it is projective.  The effects $u \circ P$, for filters $P$, are 
called \emph{projective units}.

\begin{proposition}[\cite{ASBook}, Theorem 8.1]
\label{prop: orthomodular}
For a projective state space, the lattice of faces is complete and
orthomodular.  The filters and the projective units, being in
one-to-one correspondence with faces, can be equipped with an
orthocomplemented lattice structure isomorphic to that of the faces.
For orthogonal faces $F$ and $G$, $u_{F \join G} = u_F + u_G$.
\end{proposition}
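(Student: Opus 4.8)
Since this statement is quoted from \cite{ASBook}, I describe the route I would follow to obtain it from the notions of filter, complement, and neutrality set up above. The plan is first to pin down the three-way bijection among faces, filters, and projective units, and then to transport the lattice structure along it. Projectivity supplies, for each face $F$ of $\Omega$, at least one filter $P$ with $F = \im_+ P \intersect \Omega$; to get a genuine correspondence I would first show $P$ is unique. A linear projection is determined by the pair of subspaces $(\im P, \ker P)$, and here $\im P$ is the linear span of $F$ while $\ker P$ is the linear span of $\ker_+ P = \im_+ P'$. Hence uniqueness of $P$ reduces to uniqueness of the complementary face $F' := \im_+ P' \intersect \Omega$, which follows from uniqueness of the complement $P'$ (a consequence of neutrality, as noted above). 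Writing $u_F := u \circ P$ then renders faces, filters and projective units mutually bijective and order-isomorphic.

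Next I would address completeness and orthocomplementation. The intersection of any family of faces is again a face, so meets exist; together with the top element $\Omega$ this makes the poset of faces a complete lattice, once one checks that arbitrary intersections of projective faces are again projective. For the orthocomplementation I would define $F \mapsto F'$ through the complementary filter and verify the three orthocomplement axioms: involutivity $F'' = F$ from bicomplementation $P'' = P$; order reversal $F \subseteq G \Rightarrow G' \subseteq F'$; and the laws $F \meet F' = \{\0\}$ and $F \join F' = \Omega$, which follow from $\im_+ P \intersect \ker_+ P = \{\0\}$ and the direct sum $A = \im P \oplus \ker P$.

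It remains to establish additivity and orthomodularity. For orthogonal faces $F \perp G$ (that is, $G \subseteq F'$) the filters satisfy $PQ = QP = 0$, so $(P+Q)^2 = P + Q$ is again a positive projection; checking that it is the filter of $F \join G$ yields $u_{F \join G} = u \circ (P+Q) = u \circ P + u \circ Q = u_F + u_G$. For comparable faces $F \subseteq G$ I would then show the compressions commute, $PQ = QP = P$, so that $Q - P$ is a filter orthogonal to $P$ whose face is exactly $F' \meet G$; the decomposition $Q = P + (Q-P)$ then gives the orthomodular law $G = F \join (F' \meet G)$. This commutation (compatibility) of compressions for comparable faces is where the substance of \cite{ASBook} sits and is the step I expect to be the main obstacle, since it requires the full interplay of neutrality and complementation rather than any formal lattice manipulation.
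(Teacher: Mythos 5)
The paper itself gives no proof of this proposition --- it is imported wholesale from Theorem 8.1 of \cite{ASBook} --- so your reconstruction can only be compared with the standard development there. Your overall architecture (establish the three-way bijection among faces, filters and projective units; transport the lattice structure; isolate orthomodularity, via commutation of compressions for comparable faces, as the genuinely hard step) matches that development in outline, and your candor about where the substance lies is appropriate.

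There is, however, a concretely false step in your uniqueness argument. You claim that for a filter $P$ the kernel $\ker P$ equals the linear span of $\ker_+ P$, so that $P$ is pinned down by the pair of faces $(F,F')$. This already fails in quantum theory: for $P : X \mapsto QXQ$ with $Q$ a rank-$k$ projector on an $n$-dimensional Hilbert space, $\ker P = \{X : QXQ = 0\}$ has dimension $n^2 - k^2$, whereas positivity forces the off-diagonal blocks of any element of $\ker_+ P$ to vanish, so $\lin \ker_+ P$ has dimension only $(n-k)^2$. Hence knowing $F$ and $F'$ does not determine $\ker P$ by your route, and the injectivity of the map from filters to faces --- precisely the ``one-to-one correspondence'' asserted in the statement --- is not established. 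The repair goes through the dual side: $\ker P$ is the annihilator of $\im P^*$, and it is the \emph{image} of the adjoint (not the kernel of $P$) that is positively generated, so $\ker P = (\lin \im_+ P^*)^{\perp}$ is determined by the associated face of the dual cone; in the quantum example this correctly recovers $\{X : \tr (XY) = 0 \text{ for all } Y = QYQ\} = \ker P$. Relatedly, your parallel assertion that $\im P = \lin F$ is true but is itself a nontrivial property of complemented projections rather than a formality. The remaining steps (completeness from closure of faces under intersection, $PQ = QP = 0$ for orthogonal filters once positive generation of images is available, and the reduction of orthomodularity and of $u_{F \join G} = u_F + u_G$ to properties of $P+Q$ and $Q-P$) are sound as a plan, though each conceals checks of normalization and bicomplementation that you should not wave through.
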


The relevant orthocomplementation is the map $P \rightarrow P'$ described
in the definition of filter above; by Proposition \ref{prop: orthomodular} 
it transfers to the lattices of faces and of projective units.

\subsection{Self-duality and perfection}
\label{subsec: perfection}

A regular cone $A_+$ is said to be \emph{self-dual} if there exists an
inner product $(.,.)$ on $A$ such that $A_+ = \{ y \in A: \forall x \in A_+, ~
( y , x ) \ge 0 \}$.  
(We sometimes refer to the RHS of this 
expression, even when $A_+$ is not self-dual, 
as the \emph{internal dual cone} of $A_+$ relative to the inner 
product, since it is affinely isomorphic to the dual cone.)  
This is
equivalent to the existence of an order isomorphism $\phi: A^*
\rightarrow A$ such that bilinear form $\langle . , \phi(.) \rangle$
is an inner product on $A$.  It is stronger than just order-isomorphism 
between $A$ and $A^*$, since we may have $\phi(A^*_+)=A_+$ without the nondegenerate
bilinear form $\langle . , \phi( . ) \rangle$ being positive definite (for
example, if $A_+$ is the cone with square base).     

\begin{definition}
A cone $A_+ \subset A$ 
is called \emph{perfect} if we can introduce a fixed inner product on $A$ 
such that each face of the cone (including the cone itself) 
is self-dual with respect to the restriction of that inner 
product to the span of the face.
\end{definition}

For such cones, the orthogonal (in the self-dualizing inner product)
projection $P_F$ onto the span of a face is a positive linear map
\cite{IochumThesis, IochumBook}.
It is clearly bicomplemented.  If the system has a distinguished order 
unit, with respect to which $P_F$ is normalized, then $P_F$ is a filter.

\begin{definition}
A \emph{perfect system} is one whose state space $A_+$ is perfect
and for which each of the orthogonal projections $P_F$ onto $\lin F$ is
normalized.
\end{definition}

It follows from this definition that the projections $P_F$ of a perfect
system are filters, hence a perfect system is projective.

\begin{question}
For a perfect cone, is there always a choice of order unit that makes
it projective?
\end{question}

\begin{question}
Are there perfect cones that can be equipped with order units that 
make them projective in inequivalent ways?
\end{question}

One may investigate these questions by looking at an analogue of
\emph{tracial states} (Def. 8.1 and the remark following it in
\cite{ASBook}).  In an appropriate setting (which includes systems
with spectral duality (\cite{ASBook}, Def. 8.42) 
in general, and is equivalent to projective
systems in the finite-dimensional case) a tracial state $\omega$ is one that is
central, i.e. such that $(P+P')\omega = \omega$ for all filters $P$.
Equivalently it is the intersection of $\conv(F \union F')$ for all
projective faces $F$.  The conditions $(P + P' ) \omega = \omega$ are
linear, so this defines a subspace of the state space; in a self-dual
cone, it also gives a subspace of the observables, and it is natural
to ask whether the order unit lies in that subspace, and whether,
indeed, in an irreducible self-dual projective cone the tracial states
are just the one-dimensional space generated by the order unit. 
 %% [I
 %%  think I have tracked down a proposition in a paper by Bellissard and
 %%  Iochum that implies the analogous statement for the notion of
 %%  orthotraciality introduced in the next paragraph.]
If so, that suggests that we consider an analogue of the notion of the
linear space generated by tracial elements, for perfect cones: the
linear space spanned by states (or effects) such that $(P_F +
P_{F'})\omega = \omega$ for every orthogonal projection $P_F$ onto the
span of a face $F$.  We call the nonnegative elements of such a linear
space in a perfect cone \emph{orthotracial}.

\begin{proposition}
A system with $A_+$ a perfect cone and an orthotracial element $e$ in
the interior of $A_+$ taken as the order unit is the same thing as a
perfect projective system.
\end{proposition}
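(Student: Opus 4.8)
The plan is to read the statement as an identification of two classes of objects and prove both implications, throughout exploiting self-duality to pass between functionals and vectors. Fix the inner product $(\cdot,\cdot)$ witnessing perfection. The orthogonal projections $P_F$ onto $\lin F$, together with the projections $P_{F'}$ onto the span of the complementary face, depend only on the cone and not on any choice of order unit; they are positive and self-adjoint, and since $\lin F$ and $\lin F'$ are orthogonal we have $P_F P_{F'}=P_{F'}P_F=0$. By the definitions recalled above, a \emph{perfect system} is exactly a perfect cone for which every $P_F$ is normalized with respect to the chosen order unit, and any such system is automatically projective with the $P_F$ as filters. Hence it suffices to show: (i) if $A_+$ is perfect and an orthotracial interior element $e$ is taken as order unit, then every $P_F$ is normalized; and (ii) conversely, the order unit of a perfect system is orthotracial and interior. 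The one translation I use repeatedly is that, because $A_+$ is self-dual, normalization $u(P_F x)\le u(x)$ for all $x\in A_+$ is equivalent, writing $u(\cdot)=(\hat u,\cdot)$ and using self-adjointness of $P_F$, to the cone membership $\hat u - P_F\hat u\in A_+$.

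For (i), suppose $e$ is orthotracial, so that $(P_F+P_{F'})e=e$, i.e. $e-P_Fe=P_{F'}e$, for every face $F$. For any $x\in A_+$, self-adjointness of $P_{F'}$ gives $(e-P_Fe,x)=(P_{F'}e,x)=(e,P_{F'}x)$, and this is $\ge 0$ because $P_{F'}x\in A_+$ while $e$, lying in the self-dual cone $A_+$, is nonnegative on $A_+$. Thus $u(P_Fx)=(e,P_Fx)\le(e,x)=u(x)$, so $P_F$ is normalized. Since $e$ is interior it is a legitimate order unit, and we obtain a perfect system, hence a perfect projective system.

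For (ii), let $\hat u$ represent the order unit of a perfect system; it is interior because the self-dualizing identification is an order isomorphism carrying $\interior A_+^*$ to $\interior A_+$. Normalization gives $\hat u - P_F\hat u\in A_+$. Applying the idempotent $P_F$ annihilates this element, so it lies in $\ker_+ P_F$. Now I invoke bicomplementation of the filter $P_F$: one has $\ker_+ P_F=\im_+ P_{F'}\subseteq\lin F'$, so $P_{F'}$ fixes $\hat u - P_F\hat u$; combining this with $P_{F'}P_F=0$ yields $P_{F'}\hat u=\hat u - P_F\hat u$, that is $(P_F+P_{F'})\hat u=\hat u$, so $\hat u$ is orthotracial. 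I expect this last step to be the only delicate point. It is tempting to argue merely that $\hat u - P_F\hat u\in\ker P_F=(\lin F)^{\perp}$, but that orthogonal complement is strictly larger than $\lin F'$ (in the quantum case it contains the off-diagonal blocks), so the desired conclusion would fail; what makes the argument go through is precisely that the \emph{positive} part of the kernel equals the complementary face, i.e. the bicomplementation of filters supplied by Proposition~\ref{prop: orthomodular}.
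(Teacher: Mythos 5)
Your proof is correct. The forward direction is essentially the paper's own argument: both of you use positivity of the orthogonal projection $P_{F'}$ in a perfect cone to get $e - P_F e = P_{F'}e \ge 0$, and then convert this into normalization of $P_F$ via self-adjointness of $P_F$ and self-duality of $A_+$ (the paper leaves that last conversion implicit; you spell it out). Where you genuinely diverge is the converse. The paper obtains orthotraciality of the order unit in one line from the lattice identity $u_F + u_{F'} = u$ for complementary projective units (Proposition~\ref{prop: orthomodular}, i.e.\ the Alfsen--Shultz structure theory), read as $(P+P')u=u$. You instead re-derive the needed identity from scratch: normalization gives $\hat u - P_F\hat u \in A_+$, idempotence places it in $\ker_+ P_F$, and the complementation $\ker_+ P_F = \im_+ P_{F'}$ together with $P_{F'}P_F=0$ forces $P_{F'}\hat u = \hat u - P_F\hat u$. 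This is more self-contained, since it bypasses the orthomodular-lattice machinery, but it leans on correctly identifying the complement of the orthogonal projection $P_F$ with $P_{F'}$ --- that is, on $\ker_+ P_F$ being exactly the cone over the complementary face $F'$ rather than merely the positive part of $(\lin F)^\perp$. You rightly flag this as the one delicate point, and it is precisely the bicomplementation the paper asserts for orthogonal projections onto spans of faces of perfect cones, so the step is justified. Either route works; the paper's is shorter given the cited structure theory, while yours makes the mechanism explicit.
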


\begin{proof}
Let $e$ be orthotracial and in the interior of 
$A_+$.
Orthotraciality of $e$
says $P_F e + P_{F'} e = e$, i.e. $e - P_{F} e = P_{F'}e$.  
Since in any perfect cone the orthogonal
projections onto spans of faces are positive, $P_{F'} e \ge 0$; 
hence $e - P_Fe \ge 0$, i.e. $P_F e \le e$, i.e.  $P_F$ is normalized.
Conversely, in a perfect projective system the order unit is
orthotracial (as well as tracial).  We have already pointed out that
the orthogonal projectors $P$ onto spans of faces are compressions/filters
in this context; from $p + p' = u$, and $p := Pu, p' :=P'u$, we have
$(P + P' )u = u$ for all filters $P$.
\qed
\end{proof}

\begin{question}
Is an orthotracial state automatically in the interior of $A_+$?  
\end{question}

\section{Measurements, measurement entropy, and majorization}
\label{sec: majorization}
%% Perhaps define measurement and preparation entropies in general here?

\begin{definition}
Let Axiom {\bf WS} hold.  A \emph{spectral measurement} on state $\omega$
is a measurement that distinguishes the pure states $\omega_i$ appearing in a 
convex decomposition of $\omega$.  
\end{definition}

Consider a system satisfying Axiom {\bf S} whose normalized state space
$\Omega$ has maximal number of perfectly distinguishable pure states
$n$.  Call a function $f: \R^n \rightarrow \R$ \emph{symmetric} if
$f(\x) = f(\sigma(\x))$ for any permutation $\sigma$.  For any
symmetric function $f: \R^n \rightarrow \R$, we define another
function, $f: \Omega \rightarrow \R$, by $f(\omega) = f(\p)$ where
$\p \in \R^n$ are the probabilities in a decomposition of 
$f$ into perfectly distinguishable pure states (extended by adding zeros if
the decomposition uses fewer than $n$ states).  By symmetry and unique
spectrality this is independent of the choice of decomposition, so our
claim that it defines a function is legitimate.  Define the functions
$\lambdad: \Omega \rightarrow \R^n$ and $\lambdau: \Omega \rightarrow
\R^n$ to take a state and return the decreasingly-ordered and
increasingly-ordered decomposition probabilities, respectively, of
$\omega$.  Then $f(\omega) = f(\lambdad(\omega)) =
f(\lambdau(\omega))$.

\begin{definition}
For $x, y \in \R^n$, $x \prec y$, ``$x$ is majorized by $y$'', means that
$\sum_{i=1}^k x^\downarrow_i \le \sum_{i=1}^k y^\downarrow_i$
for $k = 1,...,n-1$, and $\sum_{i=1}^n x^\downarrow_i = \sum_{i=1}^n y^\downarrow_i$.
\end{definition}

If the first condition holds, and the second holds with $\le$ in place of 
equality, we say $x$ is \emph{lower weakly majorized} by $y$, $x \prec_w y$. 

We can extend the majorization relation to the set of 
all ``vectors'' (i.e. $1 \times n$ row matrices) of finite length ($n$ 
not fixed) by padding the shorter vector with zeros and applying the
above definition.

\begin{theorem}\label{theorem: substochastic weak majorization}
An $n \times n$
matrix $M$ is doubly substochastic iff $y \in \R^n_+ \implies (My \in 
\R^n_+  ~\&~ My \prec_w y)$.  
\end{theorem}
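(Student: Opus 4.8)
The plan is to work directly from the definition that $M$ is \emph{doubly substochastic}: $M_{ij}\ge 0$ for all $i,j$, every row sum $\sum_j M_{ij}\le 1$, and every column sum $\sum_i M_{ij}\le 1$. I would prove the two implications separately, and I expect the forward (``only if'') direction to carry essentially all the content, with the converse reducing to testing the hypothesis on cleverly chosen vectors.

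For the ``only if'' direction I would assume $M$ doubly substochastic and verify the two asserted properties for $y\in\R^n_+$. Nonnegativity of $My$ is immediate since $M$ and $y$ have nonnegative entries. For $My\prec_w y$, I would fix $k\in\{1,\dots,n\}$ and let $S$ be a set of $k$ indices on which $My$ attains its $k$ largest components, so that $\sum_{i=1}^k (My)^\downarrow_i=\sum_{i\in S}(My)_i=\sum_j c_j y_j$, where $c_j:=\sum_{i\in S}M_{ij}$. Two facts drive the argument: $0\le c_j\le \sum_i M_{ij}\le 1$ (from the column-sum bound) and $\sum_j c_j=\sum_{i\in S}\sum_j M_{ij}\le |S|=k$ (from the row-sum bounds). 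Hence $\sum_j c_j y_j$ is at most the maximum of $\sum_j w_j y_j$ over all $w$ with $0\le w_j\le 1$ and $\sum_j w_j\le k$; since $y\ge 0$, this linear program is maximized by placing unit weight on the $k$ largest $y_j$, giving exactly $\sum_{i=1}^k y^\downarrow_i$. This yields $\sum_{i=1}^k (My)^\downarrow_i\le \sum_{i=1}^k y^\downarrow_i$ for every $k$, which is precisely $My\prec_w y$.

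For the ``if'' direction I would extract the three defining conditions by evaluating the hypothesis on well-chosen nonnegative vectors. Applying $My\in\R^n_+$ to the standard basis vector $\e_j$ shows the $j$-th column $M\e_j$ is nonnegative, so $M_{ij}\ge 0$ for all $i,j$. Taking $y=\e_j$ and reading off the $k=n$ instance of $M\e_j\prec_w \e_j$ gives $\sum_i M_{ij}=\sum_{i=1}^n (M\e_j)^\downarrow_i\le\sum_{i=1}^n (\e_j)^\downarrow_i=1$, i.e.\ every column sum is at most $1$. Finally, taking $y=\1$, the $i$-th component of $M\1$ is the $i$-th row sum $r_i$, and the $k=1$ instance of $M\1\prec_w\1$ says $\max_i r_i=(M\1)^\downarrow_1\le \1^\downarrow_1=1$, so every row sum is at most $1$. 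Together these three conditions are exactly double substochasticity.

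The only real content sits in the ``only if'' step, and specifically in the evaluation $\max\{\sum_j w_j y_j : 0\le w_j\le 1,\ \sum_j w_j\le k\}=\sum_{i=1}^k y^\downarrow_i$ for $y\ge 0$; I expect this greedy/linear-programming bound, together with the observation that the row- and column-sum constraints translate into exactly the constraints $\sum_j c_j\le k$ and $c_j\le 1$, to be the crux. An alternative route for this direction would be to invoke the classical fact that every doubly substochastic matrix is dominated entrywise by a doubly stochastic matrix $D$, together with $Dy\prec y$ from Birkhoff's theorem: then $0\le My\le Dy$ entrywise forces $My\prec_w Dy$, and since majorization implies weak majorization and the latter is transitive, $My\prec_w y$. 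I would prefer the self-contained combinatorial argument above, reserving the embedding-plus-Birkhoff argument as a remark.
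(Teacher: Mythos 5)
Your proof is correct. Note that the paper does not prove this statement at all: it simply cites it as C.3 on p.~39 of Marshall--Olkin, so there is no in-paper argument to compare against. Your forward direction is sound and self-contained: the reduction of $\sum_{i\in S}(My)_i$ to $\sum_j c_j y_j$ with $0\le c_j\le 1$ and $\sum_j c_j\le k$, followed by the greedy evaluation of $\max\{\sum_j w_j y_j : 0\le w_j\le 1,\ \sum_j w_j\le k\}=\sum_{i=1}^k y^\downarrow_i$ for $y\ge 0$, is exactly the right crux (a one-line exchange argument, $\sum_{j\in T}y_j-\sum_j w_jy_j\ge y^\downarrow_k\,(k-\sum_j w_j)\ge 0$ with $T$ the top-$k$ index set, nails it down if you want it fully explicit). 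The converse, extracting entrywise nonnegativity from $Me_j\ge 0$, column sums from the $k=n$ inequality at $y=e_j$, and row sums from the $k=1$ inequality at the all-ones vector, is also complete. The standard textbook route, which you correctly flag as an alternative, instead dominates $M$ entrywise by a doubly stochastic matrix and invokes Birkhoff (or Hardy--Littlewood--P\'olya) to get $Dy\prec y$; your direct argument buys independence from that machinery at the cost of the small linear-programming lemma, and matches the definition of $\prec_w$ used in the paper (all $n$ partial-sum inequalities with $\le$). No gaps.
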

This is C.3 on page 39 of \cite{MarshallOlkin}.

\begin{definition}
A function $f: \R^n \rightarrow \R$ is called 
\emph{Schur-concave} if for every $\v, \w \in \R^n$, $\v$ 
majorizes $\w$ implies $f(\v) \le f(\w)$. 
\end{definition} 

\begin{proposition} Every concave symmetric function is Schur-concave.
\end{proposition}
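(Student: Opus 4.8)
The plan is to reduce Schur-concavity to a single application of Jensen's inequality, using the classical characterization of majorization by doubly stochastic matrices. First I would invoke the Hardy--Littlewood--P\'olya theorem (see \cite{MarshallOlkin}): for $\v, \w \in \R^n$, $\v$ majorizes $\w$ if and only if $\w = D\v$ for some doubly stochastic matrix $D$. This is the exact-equality counterpart of Theorem~\ref{theorem: substochastic weak majorization}, with \emph{doubly stochastic} replacing \emph{doubly substochastic} and full majorization replacing weak majorization. Next, by the Birkhoff--von Neumann theorem the doubly stochastic matrix $D$ is a convex combination of permutation matrices, say $D = \sum_k \lambda_k P_k$ with $\lambda_k \ge 0$ and $\sum_k \lambda_k = 1$. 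Hence $\w = \sum_k \lambda_k (P_k \v)$ exhibits $\w$ as a convex combination of the coordinate permutations of $\v$.

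The conclusion then follows by combining concavity with symmetry. Assuming $\v$ majorizes $\w$, concavity of $f$ gives $f(\w) = f\left(\sum_k \lambda_k P_k \v\right) \ge \sum_k \lambda_k f(P_k \v)$. By symmetry of $f$, each term satisfies $f(P_k \v) = f(\v)$, so the right-hand side equals $\left(\sum_k \lambda_k\right) f(\v) = f(\v)$. Therefore $f(\w) \ge f(\v)$; that is, $\v$ majorizes $\w$ implies $f(\v) \le f(\w)$, which is exactly the definition of Schur-concavity. Note that the hypothesis of concavity is used precisely once, at the Jensen step, and symmetry is used only to identify the value of $f$ at each permuted point with its value at $\v$.

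The only nonroutine ingredient is the majorization characterization itself: the equivalence of $\w \prec \v$ with $\w = D\v$ for a doubly stochastic $D$, together with Birkhoff's decomposition of $D$ into a convex combination of permutation matrices. Both are standard results recorded in \cite{MarshallOlkin}, so I would simply cite them rather than reprove them. Once they are in hand there is no remaining obstacle — the argument collapses to the displayed two-line inequality — so the main work of the proposition is really bookkeeping about which direction of the majorization inequality matches the ``$\le$'' in the definition of Schur-concavity.
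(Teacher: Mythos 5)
Your proof is correct and is the standard Hardy--Littlewood--P\'olya/Birkhoff argument; the paper itself states this proposition without proof, treating it as a known fact from \cite{MarshallOlkin}, and your argument is precisely the one that reference supplies. The direction-checking at the end is also right: $f(\w) \ge f(\v)$ when $\v$ majorizes $\w$ is exactly the paper's definition of Schur-concavity.
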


An effect is called \emph{atomic} if it is on an extremal ray of $A^*_+$
(the cone of normalized effects) and is the maximal normalized effect
on that ray.  It is equivalent, in the projective context, to say it
is an atom in the orthomodular lattice of projective units.  In a
projective state space the projective units are precisely the extremal
points of the convex compact set $[0,u]$ of effects. (This is the 
finite-dimensional case of 
Proposition 8.31 of \cite{ASBook}.)

In projective state spaces, elements of $A^*$ have a spectral
expansion in terms of mutually orthogonal projective units.  We may
take these units to be atomic, but then the expansion coefficients may
be degenerate.  We can also choose the expansion so that the
coefficients are nondegenerate (the projective units no longer being
necessarily atomic); the nondegenerate expansion is unique.  In
Appendix \ref{appendix: spectral expansion} these facts are shown to be the
finite-dimensional case of \cite{ASBook}, Corollary 8.65.  We leave
open whether the expansion is unique in the stronger sense analogous
to that in (Unique) Spectrality, that the probabilities in any 
expansion into atomic effects (though not necessarily the
expansion itself) are unique.

One wonders whether analogous things hold for the \emph{states} of a
projective system.  Weak Spectrality probably does hold (and
perhaps some analogue of weak spectral decomposition for arbitrary
elements of $A$ therefore follows, i.e. one without uniqueness).  But
there are clear counterexamples to the conjecture that projectivity
implies (Unique) Spectrality.  Strictly convex sets in finite
dimension are spectral convex sets in the sense of \cite{ASBook},
Ch. 8, and therefore normalized state spaces of projective systems.
But one can easily construct one even in two affine dimensions in
which there is a state with two distinct convex decompositions into
perfectly distinguishable (``antipodal'') pure states, having very
different probabilities.  A non-equilateral isosceles triangle that
has been perturbed (``puffed out'') to be strictly convex (and
therefore spectral, and the base for 
a projective system) does the trick.  One can even
construct an example (not strictly convex, but still spectral in the
sense of \cite{ASBook}, Ch. 8) in which there is a state with convex
decompositions into different numbers of perfectly distinguishable pure
states.  See Theorem 8.87 of \cite{ASBook}.  For the special case of
the family of sets constructed in that theorem, illustrated in their
Fig. 8.1, the ``triangular pillow'' (an equilateral triangle puffed
into a third dimension), the state at the barycenter of the
equilateral triangle (with vertices the three pure states in the
``equatorial plane'') can be written as the sum of $1/3$ times each of
the three vertices, or of $1/2$ times the ``north pole'' plus $1/2$
times the ``south pole''.  It would be nice to know whether or
not this state space is self-dual.

For a projective state space, every atomic effect takes the value 
$1$ on a \emph{unique} normalized state, which is extremal in $\Omega$,
called $\widehat{e}$, and every extremal normalized state takes the
value $1$ on a unique atomic effect, called $\widetilde{\omega}$.  ~ $\widehat{}$ 
~ and ~ $\widetilde{}$ ~ are 1-1 maps of the atomic effects onto the pure states
and vice versa, and are each others' inverses.  For a pair of 
states $\omega, \sigma$, $\widetilde{\omega}(\sigma)$ is sometimes called
the \emph{transition probability} from $\sigma$ to $\omega$.  

\begin{definition}[Symmetry of Transition Probabilities]
A system is said to satisfy \emph{Symmetry of Transition Probabilities} 
(or Axiom {\bf STP}) if for any pair of pure states $\omega, \sigma$, 
$\widetilde{\omega}(\sigma) = \widetilde{\sigma}(\omega)$.
\end{definition}

We call a measurement \emph{fine-grained} if all of its effects are 
proportional to atomic effects (cf. \cite{Entropy, SWentropy}). 

\begin{theorem}\label{theorem: majorization}
Let a system satisfy Unique Spectrality, Symmetry of
Transition Probabilities, and Projectivity.  Then for
any state $\omega$ it holds that for any fine-grained measurement
$e_1,...,e_n$, the
vector ${\mathbf p} = [e_1(\omega), ..., e_n(\omega)]$ 
of probabilities of the measurement outcomes is majorized by
the vector of probabilities of outcomes for a spectral measurement
on $\omega$.
\end{theorem}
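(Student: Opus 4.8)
The plan is to realize the fine-grained outcome vector as the image of the spectral probability vector under a doubly substochastic matrix, and then quote Theorem~\ref{theorem: substochastic weak majorization}. Fix a spectral decomposition $\omega = \sum_j \lambda_j \omega_j$ into perfectly distinguishable pure states. Any spectral measurement $\{f_j\}$ satisfies $f_j(\omega_k) = \delta_{jk}$, so its outcome probabilities on $\omega$ are $f_j(\omega) = \sum_k \lambda_k \delta_{jk} = \lambda_j$; thus the target vector is exactly $\lambda$. Each fine-grained effect factors as $e_i = c_i a_i$ with $0 \le c_i \le 1$ and $a_i$ atomic, and in a projective system the $\widehat{\phantom{e}}/\widetilde{\phantom{\omega}}$ correspondence lets us write $a_i = \widetilde{\phi_i}$, where $\phi_i = \widehat{a_i}$ is the unique pure state on which $a_i$ equals $1$. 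By linearity, $p_i = e_i(\omega) = \sum_j M_{ij}\lambda_j$ with $M_{ij} := c_i\,\widetilde{\phi_i}(\omega_j)$, so $\mathbf{p} = M\lambda$ after padding $\lambda$ with zeros and $M$ with zero columns to make it $n \times n$.

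First I would check the column sums and positivity of $M$. Each entry is nonnegative since $c_i \ge 0$ and atomic effects are positive. Because $\{e_i\}$ is a measurement and each $\omega_j$ is normalized, $\sum_i M_{ij} = \sum_i e_i(\omega_j) = u(\omega_j) = 1$, so every column sum equals $1$ (padded columns sum to $0$).

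The substance is the row-sum bound, and this is where Symmetry of Transition Probabilities and Projectivity are genuinely used. Applying Axiom {\bf STP} to rewrite $\widetilde{\phi_i}(\omega_j) = \widetilde{\omega_j}(\phi_i)$, the $i$-th row sum becomes $\sum_j M_{ij} = c_i \big(\sum_j \widetilde{\omega_j}\big)(\phi_i)$. It therefore suffices to establish that for perfectly distinguishable pure states one has $\sum_j \widetilde{\omega_j} \le u$, for then $\sum_j M_{ij} \le c_i\, u(\phi_i) = c_i \le 1$ and $M$ is doubly substochastic. I expect this bound to be the main obstacle. The cleanest route is to show $\widetilde{\omega_j} \le g_j$ for a measurement $\{g_j\}$ witnessing distinguishability ($g_j(\omega_k) = \delta_{jk}$): from $g_j(\omega_j) = 1$ and projectivity, $\omega_j$ passes the relevant filter with certainty, and since $\widetilde{\omega_j}$ is the smallest projective unit taking value $1$ on $\omega_j$, neutrality of filters should force $\widetilde{\omega_j} \le g_j$; summing gives $\sum_j \widetilde{\omega_j} \le \sum_j g_j \le u$. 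Alternatively one argues that the atoms $\R_+\omega_j$ are pairwise orthogonal in the orthomodular lattice and applies additivity of projective units on orthogonal joins (Proposition~\ref{prop: orthomodular}) to get $\sum_j \widetilde{\omega_j} = u_{\bigvee_j \R_+\omega_j} \le u$. Making this orthogonality/comparison step fully rigorous from the projective structure is the delicate part of the argument.

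Finally, with $M$ doubly substochastic, Theorem~\ref{theorem: substochastic weak majorization} yields $\mathbf{p} = M\lambda \prec_w \lambda$. Since $\sum_i p_i = u(\omega) = 1 = \sum_j \lambda_j$, weak majorization together with equality of total mass upgrades to full majorization, $\mathbf{p} \prec \lambda$, which is precisely the assertion that the fine-grained outcome probabilities are majorized by those of a spectral measurement on $\omega$.
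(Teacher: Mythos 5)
Your proposal is correct and follows essentially the same route as the paper's proof: the same matrix $M_{ij}=c_i\pi_i(\omega_j)$, the same use of {\bf STP} to turn the row-sum bound into $\sum_j\widetilde{\omega}_j(\widehat{\pi_i})\le u(\widehat{\pi_i})$, and the same padding-plus-substochasticity argument upgraded to full majorization via equal total mass. The step you flag as delicate is handled in the paper exactly by your second route --- perfectly distinguishable pure states correspond to mutually orthogonal projective units (citing Alfsen--Shultz), whence $\sum_j\widetilde{\omega}_j\le u$.
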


\begin{lemma}\label{lemma: distinguishability}
For a system satisfying Projectivity, $\omega$ is perfectly
distinguishable from $\sigma$ if, and only if, $\face(\omega)
\subseteq \face(\sigma)'$.
\end{lemma}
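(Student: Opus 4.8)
The plan is to reduce both directions to the filter $P_G$ that projectivity associates with the face $G := \face(\sigma)$, its complement $P_{G'}$, and the corresponding projective units $u_G = u \circ P_G$, $u_{G'} = u \circ P_{G'}$. I would use freely that $\im_+ P_G = G$, $\ker_+ P_G = \im_+ P_{G'} = G'$, that $u_G + u_{G'} = u$ (Proposition~\ref{prop: orthomodular}, applied to the orthogonal pair $G, G'$), and two bookkeeping facts: since $u$ is interior to $A^\sharp_+ = A^*_+$ it is strictly positive on $A_+\setminus\{0\}$, whence for any state $\tau$ one has $u_G(\tau) = u(P_G\tau) = 0$ iff $P_G\tau = 0$ iff $\tau \in \ker_+ P_G = G'$; and $G'$ is a face, so $\omega \in G'$ already yields $\face(\omega)\subseteq G'$.

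For the ``if'' direction I would take $\face(\omega)\subseteq G'$, hence $\omega \in G'$, so $u_G(\omega) = 0$ by the strict-positivity fact, while $u_G(\sigma) = u(\sigma) = 1$ since $\sigma \in \im_+ P_G$ is fixed by the projection $P_G$. As $0 \le u_G \le u$ by normalization, $\{u - u_G,\, u_G\}$ is a two-outcome measurement, and relabelling $e_1 = u - u_G$, $e_2 = u_G$ gives $e_i(\omega_j)=\delta_{ij}$ for $\omega_1=\omega,\omega_2=\sigma$, i.e. $\omega$ and $\sigma$ are perfectly distinguished.

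For the ``only if'' direction, perfect distinguishability yields an effect $e$ in a measurement with $e(\omega) = 1$, $e(\sigma) = 0$, and hence $e \le u$. Since $e \ge 0$ annihilates $\sigma$, it annihilates the whole order ideal $\face(\sigma) = G$ (every $x \in G$ satisfies $x \le \lambda\sigma$ for some $\lambda > 0$), i.e. $e(P_G x) = 0$ for all $x \in A_+$, which says $P_G^* e = 0$. The crucial step is then to invoke bicomplementedness of the filter in the form $\ker_+ P_G^* = \im_+ P_{G'}^*$, so that $e \in \ker_+ P_G^*$ is fixed by $P_{G'}^*$; combined with $e \le u$ this gives $e(x) = e(P_{G'}x) \le u(P_{G'}x) = u_{G'}(x)$ for every $x \in A_+$, i.e. $e \le u_{G'}$. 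Evaluating at $\omega$ forces $1 = e(\omega) \le u_{G'}(\omega) \le u(\omega) = 1$, hence $u_{G'}(\omega)=1$, hence $u_G(\omega) = 0$, hence $\omega \in G'$, and so $\face(\omega)\subseteq G' = \face(\sigma)'$.

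The hard part will be the ``only if'' direction, and within it precisely the identity $\ker_+ P_G^* = \im_+ P_{G'}^*$, which is where bicomplementedness (not mere complementedness) of filters does the real work: it is what converts the operational hypothesis ``some effect kills $\face(\sigma)$ and accepts $\omega$ with certainty'' into the lattice-theoretic conclusion that $\omega$ sits inside the orthocomplementary face. The remaining manipulations are routine, resting only on neutrality/strict positivity of $u$ and on $u_G + u_{G'} = u$.
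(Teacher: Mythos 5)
Your proof is correct, and for the ``if'' direction it is exactly the paper's argument: both of you distinguish the two states with the complementary pair of projective units $u\circ P_G$ and $u\circ P_{G'}$ attached to $G=\face(\sigma)$. The paper's proof effectively stops there, declaring the converse straightforward; your ``only if'' argument supplies the part left implicit, and it goes through: from $e(\sigma)=0$ and $0\le e\le u$ you correctly pass to $e$ annihilating the whole cone face (via $x\le\lambda\sigma$), hence $P_G^*e=0$, hence $e\le u_{G'}$, which at $\omega$ forces $u_{G'}(\omega)=1$ and so $\omega\in G'$. The one point to tighten is the identity $\ker_+ P_G^*=\im_+ P_{G'}^*$: bicomplementedness as defined in the paper only guarantees that $P_G^*$ has \emph{some} complement, and identifying that complement with $(P_{G'})^*$ is the separate (standard, but nontrivial) fact that the adjoint of the complement is the complement of the adjoint, valid for compressions as in Alfsen--Shultz; you should cite or verify that rather than fold it into the word ``bicomplemented.'' With that reference added, your write-up is a complete proof of both directions, which is more than the paper itself records.
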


\begin{proof}[of Lemma \ref{lemma: distinguishability}] 
It follows straightforwardly from the definition of filters and their
complements, that for $P$ the filter associated with $\face{(\sigma)}$
and $P'$ the filter associated with $\face(\sigma')$, the projective
units $u \circ P$ and $u \circ P'$ distinguish $\sigma$ from $\omega$.
\qed
\end{proof}

\begin{proof}[of Theorem \ref{theorem: majorization}] 
Let $\omega = \sum_j p_j \omega_j$ be a convex decomposition, 
with $\omega_j$  pure and
perfectly distinguishable.  Then $p_j$ are the outcome probabilities
for a spectral measurement on $\omega$.  Write the effects $e_i$ of an
arbitrary fine-grained measurement as $e_i = c_i \pi_i$, where $0 <
c_i \le 1$ and $\pi_i$ are atomic.  Then the outcome probabilities for
this measurement, made on $\omega$, are 
\beq
q_{i} = \sum_j p_j c_i \pi_i (\omega_j)\;.
\eeq
That is, $q = Mp$ where $M_{ij} = c_i \pi_i (\omega_j)$.
So $\sum_{i=1}^M M_{ij} = \sum_i c_i \pi_i (\omega_j) = u(\omega_j) = 1$.
That is, $M$ is row-stochastic.  Also 
$\sum_{j=1}^N M_{ij} = c_i \pi_i (\sum_j \omega_j)$.  
By Symmetry of Transition Probabilities this is equal (using the fact
that ~$\widehat{}$~ and ~$\widetilde{}$~ are inverses) to 
$c_i \sum_j \widetilde{\omega}_j ( \widehat{\pi_i})$.  
%%Lemma \ref{lemma: distinguishability} tells us that
Since $\omega_i$ are orthogonal pure states, $\widetilde{\omega}_i$ 
are orthogonal projective units \cite{ASBook}, whence 
$\sum_j \widetilde{\omega}_j \le  u$, whence 
$c_i \sum_j \widetilde{\omega}_j ( \widehat{\pi_i}) \le  c_i u(\widehat{\pi}_i)
\le c_i$.  So,  $\sum_{j=1}^N M_{ij} \le c_i$.  $c_i \le 1$, so  $M$ is 
column-substochastic.

So $M$ is doubly substochastic.  Letting $R \ge N$ be the number of
outcomes of the finegrained measurement, we pad ${p}$ with $R - N$
zeros and pad ${M}$ on the right with $R-N$ zero columns to obtain a
doubly substochastic matrix $\tilde{M}$.  Then $\tilde{M} \tilde{p} =
{q}$, so by Theorem \ref{theorem: substochastic weak majorization}
${q} \prec_w {p}$.  Since $\sum_i p_i = \sum_i q_i = 1$, lower weak
majorization implies majorization, ${q} \prec {p}$.  \qed
\end{proof}

\begin{corollary}
In a perfect system satisfying Axiom {\bf S}, 
for any state $\omega$ the outcome probabilities for any
fine-grained measurement on $\omega$ are majorized by those for a
spectral measurement on $\omega$.  In particular, this is so for
systems satisfying Postulates 1 and 2 of \cite{BMU}.
\end{corollary}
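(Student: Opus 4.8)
The plan is to derive the corollary as a special case of Theorem~\ref{theorem: majorization}, whose conclusion is verbatim what we want. So I would check that a perfect system satisfying Axiom~\textbf{S} satisfies all three hypotheses of that theorem, namely Unique Spectrality, Projectivity, and Symmetry of Transition Probabilities, and then simply quote the conclusion. Unique Spectrality (Axiom~\textbf{S}) is assumed. Projectivity (Axiom~\textbf{P}) was already established in Section~\ref{subsec: perfection}: the orthogonal projections $P_F$ of a perfect system are filters, so a perfect system is projective. Only Symmetry of Transition Probabilities remains to be supplied.

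To obtain \textbf{STP} I would use the self-dualizing inner product $(\cdot,\cdot)$ provided by perfection, which identifies $A$ with $A^*$ via $x \mapsto (x,\cdot)$ and carries $A_+$ onto $A^*_+$, hence extremal rays of the state cone onto extremal rays of the effect cone. For a normalized pure state $\omega$ the minimal face $\face(\omega) = \R_+\omega$ has $\lin \face(\omega) = \R\omega$, so its filter is the rank-one orthogonal projection $P_{\face(\omega)}x = \frac{(\omega,x)}{(\omega,\omega)}\omega$, and the associated projective unit is $u \circ P_{\face(\omega)} = \frac{(\omega,\cdot)}{(\omega,\omega)}$. This functional takes the value $1$ at $\omega$ and is atomic because $\face(\omega)$ is an atom, so it must coincide with $\widetilde{\omega}$. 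Substituting, $\widetilde{\omega}(\sigma) = (\omega,\sigma)/(\omega,\omega)$ and $\widetilde{\sigma}(\omega) = (\omega,\sigma)/(\sigma,\sigma)$ for pure $\omega,\sigma$, and by symmetry of the inner product these agree precisely when all normalized pure states share a common self-inner-product $(\omega,\omega)$.

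That last equinormalization of pure states is the one point that is not mere bookkeeping, and it is where I expect the real work to lie: the neutrality of the filters $P_{\face(\omega)}$ only yields the strict inequalities $(\omega,\sigma) < (\omega,\omega)$ for $\omega \neq \sigma$, not the equality of self-norms, so pinning it down requires more of the order-unit structure of the perfect system. The economical alternative, which I would use to finish quickly, is to invoke Araki's theorem --- recorded in this paper as the equivalence of perfection with the conjunction of Projectivity and \textbf{STP} --- which delivers \textbf{STP} from perfection outright; the inner-product computation above then serves only to explain why the statement is true. With all three hypotheses in hand, Theorem~\ref{theorem: majorization} gives the majorization.

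For the final ``in particular'' clause I would appeal to Section~\ref{sec: spectrality}: Postulate~1 of \cite{BMU} is Weak Spectrality and Postulate~2 is Strong Symmetry, and their conjunction implies both Axiom~\textbf{S} and perfection. Hence every system satisfying Postulates 1 and 2 is a perfect system satisfying Axiom~\textbf{S}, and the first part of the corollary applies.
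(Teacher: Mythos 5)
Your proof is correct and follows essentially the same route as the paper: both reduce to Theorem~\ref{theorem: majorization} by noting that perfection supplies Projectivity and (via the equivalence of perfection with Projectivity plus \textbf{STP} established in Section~\ref{sec: projective stp perfect}) Symmetry of Transition Probabilities, with Axiom~\textbf{S} assumed separately, and both handle the ``in particular'' clause by citing that Postulates 1 and 2 of \cite{BMU} imply \textbf{S} together with perfection. Your exploratory inner-product computation is a nice gloss but, as you yourself note, is not needed once you invoke the recorded equivalence, so the argument stands.
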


The first statement holds because, as we will show in Section
\ref{sec: projective stp perfect}, perfect systems are the same thing
as projective systems satisfying {\bf STP}.
The second sentence holds because Postulates 1 and 2 of
\cite{BMU} imply both {\bf P} and {\bf S}.  While we shall see that 
perfection implies weak spectrality, we do not know whether it implies
{\bf S}, so {\bf S} had to be included in the premise of the Corollary.

\begin{corollary}\label{cor: mixture of reversible}
Let $\omega' = \int_K d \mu(T) T(\omega)$, where $d\mu(T)$ is a normalized
measure on the compact group $K$ of reversible transformations.  In a perfect
system satisfying {\bf S}, $\omega' 
\preceq \omega$.  
\end{corollary}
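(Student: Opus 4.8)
The plan is to reduce the statement to two ingredients: the invariance of the spectrum under reversible transformations, and the convexity of the partial sums of the ordered spectral probabilities; the conclusion then follows by applying Jensen's inequality to the barycentric integral defining $\omega'$. Since a perfect system is projective and satisfies Axiom \textbf{STP} (Section \ref{sec: projective stp perfect}), the hypotheses of Theorem \ref{theorem: majorization} are at hand, and Axiom \textbf{S} ensures that every normalized state has a well-defined spectrum; in particular $\omega'$ lies in $\Omega$, as $\Omega$ is compact and convex and $\mu$ is a probability measure. First I would note that reversible transformations preserve the spectrum: if $T\Omega = \Omega$ and $\omega = \sum_i p_i \omega_i$ is a spectral decomposition distinguished by $\{e_i\}$, then the $T\omega_i$ are pure and are distinguished by the measurement $\{e_i \circ T^{-1}\}$ (which sums to $u \circ T^{-1} = u$), so $T\omega = \sum_i p_i (T\omega_i)$ is a spectral decomposition and, by Axiom \textbf{S}, $\lambdad(T\omega) = \lambdad(\omega)$.

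The crux is to establish the variational identity
\beq
S_k(\omega) \;:=\; \sum_{i=1}^k \lambdad_i(\omega) \;=\; \max_F \, u_F(\omega),
\eeq
the maximum being over faces $F$ spanned by at most $k$ perfectly distinguishable pure states, with $u_F = u \circ P_F$ the associated projective unit. For achievability I would take $F = \face(\omega_1,\dots,\omega_k)$ generated by the $k$ spectral states of largest probability; then $u_F = \sum_{i=1}^k \widetilde{\omega}_i$, and since $\widetilde{\omega}_i(\omega_j) = \delta_{ij}$ for perfectly distinguishable pure states (by Lemma \ref{lemma: distinguishability} and atomicity), $u_F(\omega) = \sum_{i=1}^k p_i = S_k(\omega)$. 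For the reverse inequality, given any such $F$ I would complete the atomic decomposition of $u_F$ by that of the complement $u_{F'} = u - u_F$ to form a fine-grained measurement; Theorem \ref{theorem: majorization} then yields that its outcome vector is majorized by $\lambdad(\omega)$, so the block of outcomes summing to $u_F(\omega)$ has sum at most $S_k(\omega)$. As each $u_F$ is a fixed linear functional, $S_k$ is a pointwise maximum of linear functions, hence convex and lower semicontinuous.

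Finally I would combine these facts. Convexity of $S_k$ and Jensen's inequality for $\mu$ give, for every $k$,
\beq
S_k(\omega') = S_k\!\left( \int_K d\mu(T)\, T\omega \right) \le \int_K d\mu(T)\, S_k(T\omega) = \int_K d\mu(T)\, S_k(\omega) = S_k(\omega),
\eeq
where spectrum invariance is used in the penultimate equality. At $k = n$ both sides equal $u(\omega') = u(\omega) = 1$, so the total sums agree; hence $\omega' \prec \omega$, which is the assertion $\omega' \preceq \omega$. Equivalently, convexity of the $S_k$ makes $\{\sigma \in \Omega : \sigma \prec \omega\}$ a closed convex set containing each $T\omega$, and therefore their barycenter $\omega'$.

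The step I expect to be the main obstacle is the variational identity, and within it the bookkeeping that makes Theorem \ref{theorem: majorization} applicable: one must verify, via Proposition \ref{prop: orthomodular} and the spectral expansion of projective units, that every rank-$k$ projective unit $u_F$ decomposes into $k$ orthogonal atomic effects whose union with an atomic decomposition of $u_{F'}$ is a genuine fine-grained measurement, and that the orthogonality relation $\widetilde{\omega}_i(\omega_j) = \delta_{ij}$ holds. The invariance of the spectrum and the Jensen step are comparatively routine.
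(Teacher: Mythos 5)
Your proof is correct, but it takes a genuinely different route from the paper's. The paper argues directly: it takes the spectral measurement $\{e_i\}$ of $\omega'$ itself, notes that each pulled-back measurement $\{T^\dagger e_i\}$ is fine-grained on $\omega$, applies Theorem~\ref{theorem: majorization} to conclude that each vector $(e_i(T\omega))_i$ is majorized by $\lambdad(\omega)$, and finishes by observing that $\{x \in \R^n : x \prec \lambdad(\omega)\}$ is closed and convex, hence contains the barycentre $(e_i(\omega'))_i = \lambdad(\omega')$. You instead prove a GPT analogue of Ky Fan's maximum principle, $\sum_{i=1}^k \lambdad_i(\omega) = \max_F u_F(\omega)$ over faces generated by at most $k$ perfectly distinguishable pure states, which exhibits each partial sum $S_k$ as a convex function on all of $\Omega$, and then combine Jensen's inequality with the (correctly argued) invariance of the spectrum under reversible maps. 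Both routes ultimately rest on Theorem~\ref{theorem: majorization}; yours buys a stronger intermediate statement of independent interest (convexity of the Ky Fan sums on the whole state space, not just along the orbit of $\omega$), at the price of the bookkeeping you flag: completing $\{\widetilde{\sigma}_1,\dots,\widetilde{\sigma}_k\}$ to a fine-grained measurement requires writing $u_{F'}$ as a sum of orthogonal atoms, which one gets by applying Weak Spectrality (available here, since perfection implies it and Axiom {\bf S} is assumed) to a relatively interior state of $F'$ together with Proposition~\ref{prop: orthomodular}; alternatively you could skip the completion by checking directly that the $k \times N$ matrix $\widetilde{\sigma}_i(\omega_j)$ is doubly substochastic and invoking Theorem~\ref{theorem: substochastic weak majorization}. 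The paper's device of choosing the spectral measurement of $\omega'$ itself sidesteps the variational identity entirely and is shorter, but proves less along the way.
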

\begin{proof}
Let $e_i$ be the spectral measurement on $\omega'$.  Then $e_i(
\omega') = \int_K d \mu(T) e_i(T(\omega))$.  For any state $\sigma$, write
${\bf p}$ for the vector whose $i$-th entry is $e_i(\sigma)$.  Then the
spectrum of $\omega'$ is ${\bf p}(\omega')$, and ${\bf p}(\omega') 
\equiv \int_K d \mu(T) {\bf p}(T(\omega))$ but ${\bf p}(T(\omega))$ is
just the vector of probabilities for the finegrained 
measurement $\{T^{\dagger} e_i \}$
on $\omega$, hence is majorized by the spectrum of
$\omega$.  A limit of convex combinations of such things, for
example the spectrum of $\omega'$, also majorizes the spectrum of
$\omega$. \qed
\end{proof}

\begin{definition}[\cite{Entropy, SWentropy}]
The \emph{measurement entropy} $S_{meas}(\omega)$ of a state $\omega$
of a system $A$ is defined to be the infimum, over finegrained measurements,  
 of the entropy of the outcome probabilities of the
measurement.
\end{definition}

Recall that $S(\omega)$ 
is defined, for any Axiom-{\bf S} theory, as the entropy of
the probabilities in any convex decomposition of $\omega$
into perfectly distinguishable pure states.
From the definitions of $S(\omega)$ and $S_{meas}(\omega)$, Theorem 
\ref{theorem: majorization}, and the
Schur-concavity of entropy we immediately obtain:

\begin{proposition} 
For any state $\omega \in \Omega_A$ 
of a system satisfying Axioms S, P, and STP, equivalently (see the
next two sections) satisfying 
Axiom S and perfection, $S(\omega)= S_{meas}(\omega)$.
\end{proposition}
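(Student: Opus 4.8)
The plan is to establish the two inequalities $S(\omega) \le S_{meas}(\omega)$ and $S_{meas}(\omega) \le S(\omega)$ separately; the first is precisely what Theorem~\ref{theorem: majorization} is designed to deliver. Recall that $S(\omega)$ is the (Shannon) entropy $H$ of the vector $\mathbf{s}$ of probabilities in a decomposition of $\omega$ into perfectly distinguishable pure states, which is well defined by Axiom~{\bf S}, and that these same probabilities are by construction the outcome probabilities of a spectral measurement on $\omega$.

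First I would fix an arbitrary fine-grained measurement $e_1,\dots,e_n$ and set $\mathbf{p} = [e_1(\omega),\dots,e_n(\omega)]$. Theorem~\ref{theorem: majorization} gives $\mathbf{p} \prec \mathbf{s}$, that is, $\mathbf{s}$ majorizes $\mathbf{p}$. Since $H$ is symmetric and concave, it is Schur-concave, so $\mathbf{p} \prec \mathbf{s}$ yields $H(\mathbf{s}) \le H(\mathbf{p})$, i.e.\ $S(\omega) \le H(\mathbf{p})$. As the fine-grained measurement was arbitrary, taking the infimum over all of them gives $S(\omega) \le S_{meas}(\omega)$.

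For the reverse inequality I would exhibit one fine-grained measurement whose outcome entropy on $\omega$ is exactly $S(\omega)$, so that the infimum defining $S_{meas}(\omega)$ cannot exceed $S(\omega)$. The natural candidate is a spectral measurement itself: writing $\omega=\sum_{j=1}^N p_j\omega_j$, the atomic effects $\widetilde{\omega}_j$ satisfy $\widetilde{\omega}_j(\omega_k)=\delta_{jk}$ (using {\bf STP} together with the orthogonality of the $\widetilde{\omega}_j$ noted in the proof of Theorem~\ref{theorem: majorization}), so $\widetilde{\omega}_j(\omega)=p_j$. When $N$ is smaller than the maximal number of distinguishable pure states these effects need not sum to $u$, so I would complete them to a genuine measurement by expanding the residual effect $u-\sum_j\widetilde{\omega}_j$ into mutually orthogonal atomic effects, using the spectral expansion of observables discussed earlier in this section. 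The resulting fine-grained measurement has outcome vector $\mathbf{s}$ padded by zeros on $\omega$ (any added effect evaluates to $0$ there, since $\sum_j \widetilde{\omega}_j(\omega)=1$), and appending zero probabilities leaves the entropy unchanged, so its outcome entropy equals $S(\omega)$. Hence $S_{meas}(\omega)\le S(\omega)$, and together with the first inequality this gives $S(\omega)=S_{meas}(\omega)$.

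The only point needing care --- more bookkeeping than genuine difficulty --- is the verification that a spectral measurement is, or can be completed to, a fine-grained measurement in the technical sense: that the residual effect admits an expansion into atomic effects, and that padding the outcome vector with probability-zero entries is legitimate for the entropy. Everything else is an immediate consequence of Theorem~\ref{theorem: majorization} and the Schur-concavity of $H$, so I expect no substantive obstacle beyond correctly orienting the majorization relation and the resulting entropy inequality.
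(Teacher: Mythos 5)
Your proof is correct and follows essentially the same route as the paper, whose ``proof'' is just the one-line assertion that the proposition is immediate from the definitions of $S$ and $S_{meas}$, Theorem~\ref{theorem: majorization}, and Schur-concavity of the entropy. The only content you add is the explicit achievability step --- completing $\{\widetilde{\omega}_j\}$ by atomic effects summing to $u-\sum_j\widetilde{\omega}_j$ to obtain a fine-grained measurement whose outcome entropy on $\omega$ equals $S(\omega)$ --- which the paper leaves implicit but which is exactly the intended argument.
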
 

From this we immediately obtain, by Theorem 1 in \cite{Entropy}
(concavity of the measurement entropy; there is probably a similar
theorem in \cite{SWentropy}), that $S(\omega)$ is concave in any
system satisfying Axioms {\bf S}, {\bf P}, and {\bf STP} (and hence in
any system satisfying Postulates 1 and 2 (WS and Strong Symmetry) of
\cite{BMU}).  Similarly, under these assumptions any entropy-like
quantity constructed by applying a Schur-concave function $\chi$ to
the spectrum will be the same as the infimum of $\chi$ over
probabilities of measurement outcomes, and if $\chi$ is concave, so
will be the function $\omega \mapsto \chi(\lambdad(\omega))$.

\section{Spectral expansions of observables in projective systems}
\label{spectral expansions}

In proving Theorem \ref{theorem: projective stp is perfection} in
Section \ref{sec: projective stp perfect} (that any
(finite-dimensional) projective system satisfying symmetry of
transition probabilities is perfect) we will use the following fact,
which is of interest in its own right.  It is a dual-space
analogue of weak spectrality, but not obviously equivalent to it.

\begin{proposition}\label{prop: spectral expansion of observables}
In a projective system each $a \in A^*$ is a linear combination
$\sum_{i=1}^n \lambda_i p_i$ of mutually orthogonal projective units $p_i$.  
We can always choose the expansion
so that the coefficients $\lambda_i$ are nondegenerate (i.e. $i \ne j \implies
\lambda_i \ne \lambda_j$), and then the
expansion is unique.
\end{proposition}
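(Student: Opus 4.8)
The plan is to reduce to the case of an effect $0 \le a \le u$ and then construct the decomposition by induction on the height of the face lattice, peeling off the face on which $a$ attains its maximum. First I would reduce: since $u$ lies in the interior of $A^*_+ = A^\sharp_+$, there are scalars with $\alpha u \le a \le \beta u$, so replacing $a$ by $(a-\alpha u)/(\beta-\alpha)$ gives an effect in $[0,u]$; once the expansion is established for effects, I recover the general case by the inverse affine map, absorbing the shift $\alpha u$ coefficientwise after completing the $p_i$ to a resolution $\sum_i p_i = u$ (adjoin the complement of their join, with coefficient $0$). So assume $0 \le a \le u$ and set $\lambda_1 := \max_{\omega \in \Omega} a(\omega)$. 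The set $F_1 := \{\omega \in \Omega : a(\omega) = \lambda_1\}$ is the face of $\Omega$ exposed by $a$; by Projectivity it is the positive normalized part of the image of a filter $P_1$, with projective unit $p_1 := u \circ P_1$. Since $P_1$ maps $A$ into $\lin F_1$, on which $a = \lambda_1 u$ (it equals $\lambda_1$ on $F_1$ and extends linearly over the cone), we obtain $P_1^* a = a \circ P_1 = \lambda_1 (u \circ P_1) = \lambda_1 p_1$ for free.

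The inductive step requires $a = P_1^* a + (P_1')^* a$, i.e.\ that $a$ is \emph{compatible} with $P_1$. This is the main obstacle: $P_1 + P_1'$ is \emph{not} the identity in general (it annihilates the ``off-diagonal'' Peirce component, exactly as the quantum pinching $X \mapsto QXQ + (1-Q)X(1-Q)$ does), so the vanishing of $a - \lambda_1 p_1$ on both $\lin F_1$ and $\lin F_1'$ does not suffice — the off-diagonal part must be killed. In the quantum model this holds because the spectral projector of the top eigenvalue commutes with $a$; the general statement, that the compression onto the maximal face of $a$ is compatible with $a$, is precisely what the Alfsen--Shultz compression calculus provides (this Proposition being the finite-dimensional case of \cite{ASBook}, Cor.\ 8.65), and I would invoke it here. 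Granting compatibility, $b := (P_1')^* a = a - \lambda_1 p_1$ is supported on the complementary face $F_1'$, whose span has strictly smaller dimension, so the induction hypothesis applied inside $F_1'$ gives $b = \sum_{i \ge 2} \lambda_i p_i$ with the $p_i$ mutually orthogonal projective units dominated by $p_1'$ (hence orthogonal to $p_1$), yielding $a = \sum_i \lambda_i p_i$.

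To get a nondegenerate expansion I would merge repeated coefficients: whenever $\lambda_i = \lambda_j$, replace the orthogonal pair $p_i, p_j$ by their lattice join, which by Proposition \ref{prop: orthomodular} (orthogonality giving $u_{F \join G} = u_F + u_G$) is again a projective unit, namely $p_i + p_j$; iterating produces distinct coefficients.

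For uniqueness, suppose $a = \sum_i \lambda_i p_i$ with the $p_i$ mutually orthogonal projective units and the $\lambda_i$ distinct, ordered so that $\lambda_1$ is largest. For $\omega \in \Omega$ one has $a(\omega) = \sum_i \lambda_i p_i(\omega) \le \lambda_1 \sum_i p_i(\omega) \le \lambda_1$, since $\sum_i p_i$ is a projective unit and hence $\le u$. Equality forces $p_i(\omega) = 0$ for every $i$ with $\lambda_i < \lambda_1$ and $p_1(\omega) = 1$; by neutrality of filters, $p_1(\omega) = u(\omega)$ means $P_1 \omega = \omega$, so the maximizing set is exactly the face of $p_1$. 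Thus $\lambda_1 = \max_\Omega a$ and $p_1$ are determined by $a$ alone; subtracting $\lambda_1 p_1$ and repeating on the face $p_1'$ shows the whole nondegenerate expansion is forced. This mirrors the existence induction and completes the argument.
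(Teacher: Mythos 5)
Your argument is correct in outline but takes a genuinely different route from the paper's. The paper does not induct at all: it quotes Alfsen--Shultz Theorem 8.64 wholesale --- existence and uniqueness of the spectral resolution $\{e_\lambda\}_{\lambda\in\R}$ with norm-convergent Riemann sums --- and then observes that in finite dimension any chain of projective units has length at most $\dim A+1$, so $\lambda\mapsto e_\lambda$ is a step function with finitely many values; for partitions finer than the shortest step all Riemann sums lie in the (closed, hence limit-containing) span of the finitely many differences $e_{\lambda_i}-e_{\lambda_{i-1}}$, which are mutually orthogonal by Proposition 8.8 of \cite{ASBook} since they sum to the unit, and uniqueness is inherited directly from the uniqueness of the family $\{e_\lambda\}$. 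Your greedy induction (peel off the face of maximizers) plus the hands-on argmax uniqueness argument is more constructive, and your uniqueness proof is genuinely more self-contained than the paper's. But be aware that your load-bearing step --- compatibility of $a$ with the compression onto its face of maximizers --- is not lighter than what the paper cites: it is essentially item (i) of that same Theorem 8.64, and extracting just that item directly from the definition of spectral duality is delicate precisely because you do not yet know the spectrum is finite (you need a gap below $\max_\Omega a$ to identify the top spectral face with the face of maximizers). So you have traded one appeal to the Alfsen--Shultz machinery for another of comparable weight, not avoided it. Two smaller points: the reduction to $0\le a\le u$ is harmless but unnecessary; and your uniqueness chain $\sum_i\lambda_i p_i(\omega)\le\lambda_1\sum_i p_i(\omega)\le\lambda_1$ requires $\lambda_1\ge 0$ unless you impose the convention $\sum_i p_i=u$ (so that $\sum_i p_i(\omega)=1$) --- which you should in any case, since without it uniqueness holds only up to adding or deleting a term with coefficient zero. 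With that convention fixed, both your existence and uniqueness arguments go through.
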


We call the unique expression for $a$ as a nondegenerate linear
combination of mutually orthogonal projectors its \emph{spectral
  expansion}.  As shown in Appendix \ref{appendix: spectral
  expansion}, Proposition \ref{prop: spectral expansion of observables}
is the finite-dimensional case of the following, plus 
an observation concerning uniqueness 
following from uniqueness of the family of 
projective units in \cite{ASBook}, Theorem 8.64):

\begin{proposition}[\cite{ASBook}, Corollary 8.65]
If $A$ and $V$ are in spectral duality, then each $a \in A$ can be
approximated in norm by linear combinations $\sum_{i=1}^n \lambda_i
p_i$ of mutually orthogonal projective units $p_i$ in the
$\cP$-bicommutant of $a$.
\end{proposition}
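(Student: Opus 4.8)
The plan is to manufacture a spectral resolution of $a$ directly from the compression family that defines spectral duality, and then to recover $a$ as an order-unit-norm limit of Riemann--Stieltjes sums whose increments are mutually orthogonal projective units. I would work throughout with the order-unit norm on $A$, so that $a$ is bounded and $\alpha u \le a \le \beta u$ for some compact interval $[\alpha,\beta]$ containing its ``spectrum''.

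First I would, for each real $\lambda$, attach to $a$ a canonical compression $P_\lambda \in \cP$ that ``cuts $a$ at level $\lambda$'', in the sense that $P_\lambda a \le \lambda P_\lambda u$ while the complement satisfies $P_\lambda' a \ge \lambda P_\lambda' u$; in the quantum picture $P_\lambda$ is the spectral projection of $a$ onto the part of its spectrum at or below $\lambda$. The spectral-duality axioms are exactly what guarantee that such level compressions exist and are uniquely determined by $a$, and this last point is what forces $P_\lambda$ into the $\cP$-bicommutant of $a$: any compression commuting with $a$ must commute with a projection canonically built from $a$. I would then set $e_\lambda := u \circ P_\lambda$ and check that $\{e_\lambda\}$ is an increasing family of projective units with $e_\lambda = 0$ for $\lambda < \alpha$ and $e_\lambda = u$ for $\lambda \ge \beta$, i.e. a resolution of the order unit attached to $a$.

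Next I would fix a partition $\alpha = \lambda_0 < \lambda_1 < \cdots < \lambda_n = \beta$, set $p_i := e_{\lambda_i} - e_{\lambda_{i-1}}$, and verify that the $p_i$ are mutually orthogonal projective units summing to $u$; orthogonality and the lattice relation $e_{\lambda_i} = e_{\lambda_{i-1}} \vee p_i$ follow from monotonicity of the family together with Proposition \ref{prop: orthomodular}. The simple element $\sum_i \lambda_i p_i$ is the candidate approximant, and the heart of the statement is the estimate $\| a - \sum_i \lambda_i p_i \| \le \max_i(\lambda_i - \lambda_{i-1})$. This comes from decomposing $a$ over the slabs and using that on the $i$-th slab $a$ is squeezed between $\lambda_{i-1} p_i$ and $\lambda_i p_i$; letting the mesh tend to $0$ then yields norm approximation by linear combinations of the required form.

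The main obstacle will be the construction and control of the spectral family $\{P_\lambda\}$: showing the level compressions exist, are monotone in $\lambda$, are determined by $a$ (hence bicommutant-valued), and, above all, that the Riemann--Stieltjes sums converge to $a$ in order-unit norm rather than merely in the order. This is precisely where the full force of the spectral-duality hypotheses on $\cP$ is needed. In the finite-dimensional case relevant to Proposition \ref{prop: spectral expansion of observables} the family $\{e_\lambda\}$ has only finitely many jumps, so the integral collapses to a finite sum, the approximation becomes exact, and uniqueness of the nondegenerate expansion drops out of the uniqueness of the canonical family of projective units.
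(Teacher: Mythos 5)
Your outline is essentially the paper's own route to this statement: the paper does not prove it but cites it as Corollary 8.65 of Alfsen--Shultz, and the machinery you describe---the level compressions $P_\lambda$ with $P_\lambda a \le \lambda e_\lambda$ and $P'_\lambda a \ge \lambda e'_\lambda$, the increasing family of projective units $e_\lambda$ in the $\cP$-bicommutant of $a$, and the norm-convergent Riemann sums $\sum_i \lambda_i (e_{\lambda_i}-e_{\lambda_{i-1}})$ over partitions of $[-\|a\|,\|a\|]$---is precisely the content of Theorem 8.64 as quoted in Appendix~\ref{appendix: spectral expansion}. The hard analytic core you defer (existence and uniqueness of the spectral family, and norm rather than merely order convergence) is exactly what that cited theorem supplies, so your proposal is a faithful reconstruction of the same argument rather than an alternative to it.
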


As noted in the discussion following the definition of spectral
duality, Definition 8.42 of \cite{ASBook}, by Theorem 8.72 of
\cite{ASBook}, in finite dimension their property of spectral duality
is equivalent to all exposed faces of $\Omega$ being projective,
i.e. in our terminology, to the system $A$ being projective.  
%% The
%% $\cC$-bicommutant of $a$ is the set of all compressions compatible
%% with $a$ (i.e. $a = Pa + P'a$) and with all compressions compatible
%% with $a$.  Compatibility of two compressions is equivalent to
%% commutation.  Compatibility of projective units is defined as
%% compatibility of the associated compressions; the definition of
%% $\cP$-bicommutant is exactly analogous.

\section{For projective systems, symmetry of transition 
probabilities is perfection} 
\label{sec: projective stp perfect}

The following theorem shows that we may replace the conjunction of
Projectivity and Symmetry of Transition Probabilities with the the
property of perfection.  Parts of our proof are modeled after the
proof of Lemma 9.23 of \cite{ASBook}, but with different assumptions:
finite dimension makes certain things simpler for us, but our premise
involves only projectivity and symmetry of transition probabilities,
not the additional property of purity-preservation by compressions
that figures in said Lemma.  After proving the theorem, we realized
that essentially the same result, stated in somewhat different terms,
was proved by H. Araki in \cite{Araki80}.  We include our proof in
Appendix \ref{appendix: projective stp is perfection}.

\begin{theorem}\label{theorem: projective stp is perfection}
Let $A$ be a finite-dimensional projective system satisfying
Symmetry of Transition Probabilities.  Then there is a unique positive
linear map $\phi: A^* \rightarrow A$ such that $\phi(x) = \hat{x}$ for
each atomic projective unit $x$.  $(x,y)$, defined by $(x , y) := \langle x,
\phi(y) \rangle$, is an inner product on $A^*$, with respect to which
compressions are symmetric, i.e.: \beq (Pa , b) = (a , Pb).  \eeq
Hence $A^*_+$ (and so also $A_+$) is a perfect self-dual cone, so the
system $A$ is a perfect system.
\end{theorem}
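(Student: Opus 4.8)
The plan is to construct the claimed map $\phi$ explicitly on the atomic projective units, check that it is well defined using Symmetry of Transition Probabilities, and then read off all remaining assertions from the atomic spectral expansion of observables (Proposition~\ref{prop: spectral expansion of observables}). Throughout I use the mutually inverse bijections $\widehat{\ }$ and $\widetilde{\ }$ between atomic projective units and pure states, together with the observation that for two atomic projective units $x,y$ one has $\langle x,\widehat{y}\rangle = \widetilde{(\widehat x)}(\widehat y)$, which by \textbf{STP} equals $\widetilde{(\widehat y)}(\widehat x) = \langle y,\widehat x\rangle$. This single symmetry is the engine of the whole argument.

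First I would define $\phi(x):=\widehat{x}$ on atomic projective units and extend linearly; since these units span $A^*$ (they generate the extreme rays of $A^*_+$, and by Proposition~\ref{prop: spectral expansion of observables} every $a\in A^*$ is a combination of mutually orthogonal ones), any such $\phi$ is unique. For well-definedness I would show that $\sum_i c_i x_i=0$ forces $\sum_i c_i\widehat{x}_i=0$: pairing the latter with an arbitrary atom $y$ and using the symmetry above gives $\langle y,\sum_i c_i\widehat{x}_i\rangle=\langle\sum_i c_i x_i,\widehat{y}\rangle=0$, and since the atoms separate points of $A$ the claim follows. Positivity of $\phi$ is then immediate: writing $a\in A^*_+$ in its atomic spectral form $a=\sum_\alpha\mu_\alpha x_\alpha$ with $\mu_\alpha=a(\widehat{x}_\alpha)\ge 0$, we get $\phi(a)=\sum_\alpha\mu_\alpha\widehat{x}_\alpha\in A_+$.

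Next I would verify that $(a,b):=\langle a,\phi(b)\rangle$ is an inner product. Symmetry follows by bilinearity from the atomic identity $(x,y)=\langle x,\widehat y\rangle=\langle y,\widehat x\rangle=(y,x)$. For positive-definiteness I would take a single mutually orthogonal atomic expansion $a=\sum_\alpha\mu_\alpha x_\alpha$; since $x_\beta(\widehat{x}_\alpha)=\delta_{\alpha\beta}$ for orthogonal atoms, $(a,a)=\sum_{\alpha,\beta}\mu_\alpha\mu_\beta\langle x_\beta,\widehat{x}_\alpha\rangle=\sum_\alpha\mu_\alpha^2$, which vanishes only when $a=0$. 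Self-duality of $A^*_+$ then follows because $\phi$ is an injective (hence, since $\dim A^*=\dim A$, bijective) positive map carrying the extreme rays of $A^*_+$ exactly onto the extreme rays (pure states) of $A_+$, so $\phi(A^*_+)=A_+$ and $A^*_+=\{a:(a,b)\ge 0\ \forall b\in A^*_+\}$.

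The hard part is the symmetry of compressions, $(Pa,b)=(a,Pb)$, which must be obtained without the purity-preservation property used in Lemma~9.23 of \cite{ASBook}. Since $P$ is idempotent, it suffices to show $\im P\perp\ker P$ in $(\cdot,\cdot)$. Let $\bar P$ be the filter on $A$ dual to the compression $P$, with $\im_+\bar P=F$ the corresponding face of $A_+$. For any atom $x$ with $\widehat{x}\in F$ (so $\bar P\widehat x=\widehat x$) and any $b\in\ker P$, the duality relation gives $(b,x)=\langle b,\widehat x\rangle=\langle b,\bar P\widehat x\rangle=\langle Pb,\widehat x\rangle=0$. Because $\im P$ is precisely the span of such atoms (it is the effect space of the compressed subsystem $F$, to which Proposition~\ref{prop: spectral expansion of observables} again applies), this yields $\im P\perp\ker P$ and hence self-adjointness of every compression. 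Finally, each face of $A^*_+$ is the image of a compression, which is now an orthogonal (self-adjoint, positive, idempotent) projection onto the span of that face; by the standard fact that such a face of a self-dual cone is itself self-dual \cite{IochumThesis, IochumBook}, the cone $A^*_+$ is perfect, and therefore so is the order-isomorphic cone $A_+$, establishing that $A$ is a perfect system. The two subtle points to get right are the well-definedness of $\phi$ (handled entirely by \textbf{STP}) and the identification of $\im P$ with the span of the atoms living in $F$.
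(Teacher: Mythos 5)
Your proposal is correct and follows essentially the same route as the paper's proof in Appendix~\ref{appendix: projective stp is perfection}: define $\phi$ on atoms, use \textbf{STP} for well-definedness and symmetry of the form, the atomic spectral expansion for positivity and positive-definiteness, and the correspondence between atoms under a projective unit and pure states in the associated face to show compressions are orthogonal projections, concluding via Iochum's criterion. The only cosmetic differences are that you phrase the symmetry of compressions as $\im P \perp \ker P$ where the paper directly computes $((I-P)x, Py)=0$ on atoms, and you get positivity of $\phi$ from nonnegativity of the atomic spectral coefficients rather than from $\phi$ carrying the generators of $A^*_+$ onto those of $A_+$.
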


\begin{corollary}
For projective systems, symmetry of transition probabilities is
equivalent to perfection.
\end{corollary}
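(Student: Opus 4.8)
The statement packages two implications. The forward one---that a projective system with Symmetry of Transition Probabilities is perfect---is exactly Theorem~\ref{theorem: projective stp is perfection}, so nothing new is needed there. For the converse I would show that perfection alone forces \textbf{STP}; recall that a perfect system is automatically projective, so the hypotheses match up and it suffices to prove that in a perfect system $\widetilde{\omega}(\sigma)=\widetilde{\sigma}(\omega)$ for every pair of pure states $\omega,\sigma$.

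The plan for the converse is to compute the transition probability explicitly in the self-dualizing inner product. Let $(\cdot,\cdot)$ be the inner product witnessing perfection and use it to identify $A^*$ with $A$, so that each effect $e$ is represented by the element $y_e\in A_+$ with $e(\cdot)=(y_e,\cdot)$ and $u$ by some $\bar u\in A_+$. For a normalized pure state $\omega$ the face $\face(\omega)=\R_+\omega$ is an atom, and the orthogonal projection $P_\omega x=\frac{(\omega,x)}{(\omega,\omega)}\,\omega$ onto $\lin\face(\omega)$ is, in a perfect system, a filter; its projective unit $u\circ P_\omega$ is atomic and takes the value $1$ at $\omega$, so by the uniqueness of the atomic effect peaked at $\omega$ we get $\widetilde{\omega}=u\circ P_\omega$. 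This yields the clean formula $\widetilde{\omega}(\sigma)=\frac{(\omega,\sigma)}{(\omega,\omega)}$ and, symmetrically, $\widetilde{\sigma}(\omega)=\frac{(\omega,\sigma)}{(\sigma,\sigma)}$. Since $A_+$ is self-dual we have $(\omega,\sigma)\ge 0$, and \textbf{STP} is therefore equivalent to the assertion that for every pair of normalized pure states either $(\omega,\sigma)=0$ or $(\omega,\omega)=(\sigma,\sigma)$. When $\omega$ and $\sigma$ are perfectly distinguishable, Lemma~\ref{lemma: distinguishability} puts their faces in complementary, hence inner-product-orthogonal, position (there $P_{\face(\sigma)'}=\id-P_{\face(\sigma)}$), so $(\omega,\sigma)=0$ and that case is free.

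The real content, and the step I expect to be the main obstacle, is thus constancy of the ``norm'' $\omega\mapsto(\omega,\omega)$ over the normalized pure states, for non-orthogonal pairs. Purely two-dimensional reasoning in $\lin\{\omega,\sigma\}$ does not suffice, since $(\omega,\omega)$ is a global feature of how the ray $\R_+\omega$ sits in the cone; indeed one checks $1/(\omega,\omega)=\max\{t\ge 0:\bar u - t\omega\in A_+\}$, manifestly a property of the whole cone. I would reduce to non-orthogonal pairs and pass to the face $F=\face(\omega\join\sigma)$, which is a perfect cone of rank $2$ in the restricted inner product; by irreducibility the pure states are connected through chains of mutually non-orthogonal states (reducible cones being handled summand by summand), so constancy on each such rank-$2$ face propagates to all of $\Omega$. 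It then remains to show that a genuinely self-dual rank-$2$ cone---self-dual in the inner-product sense, not merely order-isomorphic to its dual, as the square-based cone warns---is a Lorentz cone, whose base is a ball on which all extremal states are equidistant from the centre in the self-dualizing inner product. This low-rank classification is where I would lean on the structure theory of perfect cones~\cite{IochumThesis, IochumBook}. With constancy in hand the two displayed formulas coincide, \textbf{STP} follows, and together with Theorem~\ref{theorem: projective stp is perfection} the equivalence is complete; as the authors note, essentially this result is also due to Araki~\cite{Araki80}.
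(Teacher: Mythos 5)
Your treatment of the forward direction matches the paper exactly: it is Theorem~\ref{theorem: projective stp is perfection} and nothing more is needed. For the converse the paper gives no argument at all (it declares the implication ``near-trivial'' and points to \cite{BMU}), and your reduction is correct as far as it goes: in a perfect system the filter on the atom $\face(\omega)$ is the orthogonal projection $P_\omega x = \frac{(\omega,x)}{(\omega,\omega)}\,\omega$, its projective unit is the unique atomic effect peaked at $\omega$, so indeed $\widetilde{\omega}(\sigma)=(\omega,\sigma)/(\omega,\omega)$ and \textbf{STP} becomes the statement that $(\omega,\omega)$ is constant across each ``non-orthogonality class'' of pure states. (One small slip: $P_{\face(\sigma)'}$ is the orthogonal projection onto $\lin\face(\sigma)'$, which is in general a \emph{proper} subspace of $(\lin\face(\sigma))^\perp$, so $P_{\face(\sigma)'}\neq\id-P_{\face(\sigma)}$; your conclusion $(\omega,\sigma)=0$ for distinguishable pure states survives because $\face(\sigma)'\subseteq\ker P_{\face(\sigma)}=(\lin\face(\sigma))^\perp$.)

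The genuine gap is that the two pillars of your constancy argument are asserted rather than proved, and neither follows from the definitions or the cited sources. First, that $\face(\omega)\join\face(\sigma)$ has rank $2$ for non-orthogonal atoms is the covering (exchange) property of the face lattice; Proposition~\ref{prop: orthomodular} supplies completeness and orthomodularity but not covering, and nothing in the paper establishes it for projective or perfect systems. Second, the classification of rank-$2$ perfect systems as Lorentz cones whose order unit is forced to the ``centre'' (so that all pure states have equal self-dual norm) is precisely the hard kernel of the whole claim, and you lean on \cite{IochumThesis, IochumBook} for a statement that is not obviously there. So as written the converse is not proved: you have correctly exposed that the paper's ``near-trivial'' is doing real work, but you have relocated the difficulty rather than closed it. To finish one would have to actually prove those two facts, or else verify that the argument the paper defers to in \cite{BMU} (or in Araki~\cite{Araki80}) yields constancy of $(\omega,\omega)$ from perfection alone rather than from Strong Symmetry, whose transitivity on pure states would give it immediately but is not available under the present hypotheses.
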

\begin{proof}
Theorem \ref{theorem: projective stp is perfection} 
gives one direction; the other direction,
that perfect projective systems satisfy STP,  is near-trivial (cf.
\cite{BMU}).  
\end{proof} 

\begin{corollary}
For a projective system satisfying STP any element $x \in A$ has a
``finegrained spectral expansion'' $x = \sum_i \lambda_i \omega_i $, with
$\lambda_i \in \R$ and $\omega_i$ mutually orthogonal pure states in
$\Omega$.
\end{corollary}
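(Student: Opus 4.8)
The plan is to transfer the spectral expansion of \emph{observables} from $A^*$ into $A$ using the self-dualizing isomorphism produced by Theorem~\ref{theorem: projective stp is perfection}. That theorem supplies a positive linear map $\phi\colon A^*\to A$ with $\phi(\pi)=\hat\pi$ for every atomic projective unit $\pi$, and such that $(x,y):=\langle x,\phi(y)\rangle$ is an inner product on $A^*$. Since this bilinear form is nondegenerate, $\phi$ is injective, hence (as $\dim A^*=\dim A$) a linear isomorphism; so every $x\in A$ is of the form $x=\phi(a)$ for a unique $a=\phi^{-1}(x)\in A^*$.

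First I would apply Proposition~\ref{prop: spectral expansion of observables} to $a$, in its \emph{atomic} form: because the system is projective, $a$ has an expansion $a=\sum_i\lambda_i\pi_i$ into mutually orthogonal atomic projective units $\pi_i$ (the version, noted in the discussion preceding that proposition, in which one insists on atomicity at the possible cost of degenerate coefficients). Applying $\phi$ and using linearity together with $\phi(\pi_i)=\hat{\pi_i}$ gives
\beq
x=\phi(a)=\sum_i\lambda_i\,\phi(\pi_i)=\sum_i\lambda_i\,\hat{\pi_i},
\eeq
so that setting $\omega_i:=\hat{\pi_i}$ already yields the desired form, with $\lambda_i\in\R$ and each $\omega_i$ a pure state in $\Omega$.

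It then remains to verify that the $\omega_i$ are mutually orthogonal, and this is the step I would treat most carefully. Since $\hat{}$ and $\widetilde{}$ are mutually inverse bijections between atomic effects and pure states, we have $\widetilde{\omega}_i=\pi_i$, so the atomic projective units associated with the $\omega_i$ are precisely the mutually orthogonal $\pi_i$ we started from; by the lattice correspondence of Proposition~\ref{prop: orthomodular} this is exactly orthogonality of the pure states. Concretely, for $i\neq j$ one has $\pi_i\le(\pi_j)'=u-\pi_j$, whence $\pi_i(\hat{\pi_j})\le u(\hat{\pi_j})-\pi_j(\hat{\pi_j})=0$, i.e. $\omega_i$ and $\omega_j$ are perfectly distinguishable (equivalently $(\omega_i,\omega_j)=0$ in the self-dualizing inner product).

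The only genuine subtlety, and what I expect to be the main obstacle, is the insistence on the \emph{atomic} expansion of $a$: the identity $\phi(\pi)=\hat\pi$ is guaranteed only for atomic $\pi$, so feeding $\phi$ the generic nondegenerate expansion into (possibly non-atomic) projective units would yield $\phi$-images that are positive combinations of several pure states rather than single pure states. If one prefers to start from the nondegenerate expansion $a=\sum_j\mu_j p_j$, one should first refine each projective unit into its constituent atoms, $p_j=\sum_k\pi_{j,k}$, before applying $\phi$; the resulting pure states $\hat{\pi_{j,k}}$ stay mutually orthogonal across all pairs because atoms refining distinct orthogonal projective units are themselves orthogonal. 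Everything else is bookkeeping, the real content having already been carried by Theorem~\ref{theorem: projective stp is perfection} and Proposition~\ref{prop: spectral expansion of observables}.
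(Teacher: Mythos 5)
Your proof is correct and follows essentially the same route as the paper's, which is to transfer the spectral expansion of observables from $A^*$ to $A$ via the order-isomorphism $\phi$ of Theorem~\ref{theorem: projective stp is perfection}. The paper states this in one line; your added care in refining the expansion into atomic projective units before applying $\phi$, and your verification that the resulting pure states remain mutually orthogonal, simply fills in details the paper leaves implicit.
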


This follows from the spectral expansion of observables (Proposition 
\ref{prop: spectral expansion of observables}) and Theorem 
\ref{theorem: projective stp is perfection}
since the latter implies self-duality.   
The uniqueness properties of the expansion of 
elements of $A^*$ (cf. discussion following Proposition 
\ref{prop: spectral expansion of observables}) also hold for elements
of $A$ since the expansion in the state space will be the image 
of the expansion of Proposition \ref{prop: spectral expansion of observables}
under the order-isomorphism 
$\phi: A^* \rightarrow A$.

\begin{question} 
Does perfection imply the stronger uniqueness
properties embodied in Axiom {\bf S}? 
\end{question}
If the triangular pillow based state space (see Sec. 
\ref{sec: majorization})  is perfect, then it does
not.

\section{Filters, compressions, and von Neumann's argument for entropy}
\label{SecvonNeumann}

We have given assumptions that imply the existence of a spectral entropy
and related quantities with operational interpretation in terms 
of probabilities of measurement entropy, and majorization
properties, such as Corollary \ref{cor: mixture of reversible}, that in the 
quantum case play a crucial role in thermodynamic resource theory.  
We would like to use the spectrum and associated entropic quantities
and majorization relations in a generalized thermodynamic resource theory.
In this section we take a step in this direction by extending 
von Neumann's argument that $S(\rho)$ is the thermodynamic entropy in quantum 
theory, to systems whose internal state space is a more general GPT state space
satisfying Axiom {\bf S} and Axiom {\bf P}.

Von Neumann's argument is that a reversible process, making
use of a heat bath at temperature $T$, exists between a system with
density matrix $\rho$, and a system in a definite pure state, and that
this process overall involves doing work $-kT \tr \rho \ln{\rho}$ in
the forward direction.  His argument involves a system with both
quantum and classical degrees of freedom, e.g. a one-molecule ideal
gas, and the direct heat exchange and doing of work involves classical
degrees of freedom (specifically, expanding or contracting the volume
occupied by a gas, while in contact with the heat reservoir).

Consider an arbitrary state $\omega = \sum_i q_i \omega_i$, where
$q_i, \omega_i$ are a convex expansion of the state into a set of
$N$ perfectly distinguishable pure states $\omega_i$.  Axiom {\bf S}
ensures that such expansions exist and that they uniquely define (as
in the preceding section, except with different units corresponding to
taking $\ln$ instead of the base-2 $\log$) $S(\omega) := - \sum_i q_i
\ln q_i$.  By Lemma \ref{lemma: distinguishability} 
Axiom {\bf P} ensures that there
are atomic projective units $\pi_i$ that form a measurement
distinguishing the states $\omega_i$ (that is to say, $\sum_i \pi_i =
u$, and $\pi_i(\omega_j) = \delta_{ij}$).  There are associated
filters $P_i$ such that $P_i \omega_j = \delta_{ij} \omega_j$.

{\bf Assumption:} \emph{if such filters exist, they allow us, at no 
thermodynamic cost, to take a
  container of volume $V$, containing such a particle in equilibrium at
  temperature $T$ and separate it into $N$ containers $C_i$ of volume
  $V$, still at temperature $T$, such that $C_i$ contains a system in
  state $\omega_i$, with probability $p_i$, in which case the other
  containers are empty.}

We may think of this separation process as realizing the measurement
%\{ \pi_i \}$
by using the instrument $\{ P_i \}$, with the classical measurement
outcome recorded as which container the system is in.  Because
of projectivity, it is consistent
to assume that this is possible, and that the state of the system in
container $i$ (i.e., conditional on measurement outcome $i$) is still
$\omega_i$ due to the neutrality property of filters.  
Von Neumann's argument involves instead semipermeable membranes, allowing
particles whose internal state is $\proj{i}$ to pass (from either direction), 
whilst reflecting particles whose internal state is supported on the subspace
of Hilbert space orthogonal to $\proj{i}$.  The use of analogous 
semipermeable membranes,
in the GPT case, which behave differently for systems in face $F_i$ 
(i.e., whose state ``is'' $\omega_i$) than for 
particles whose internal state ``is'' in $F_i'$, will allow us to
ultimately separate each of the mutually distinguishable states
$\omega_i$ into its own container.  We may, if we like,
represent such a procedure  
by a transformation on a tensor product of a 
classical state space and the internal state space of the particle, 
for example: 
\beq
x \otimes \omega \mapsto (x \oplus 0) \otimes P_i \omega + 
(x \oplus 1) \otimes P_i' \omega, 
\eeq
easily verified to be positive and base-norm-preserving.

In fact, the overall process that separates particles into the
containers $C_i$ could just be represented as the positive map: 
\beq 
T: x \otimes \omega \mapsto \sum_i (x + i-1)_{\mathrm{mod~}{N}} \otimes P_i \omega 
\eeq
where the first register is classical and takes values $1,...,N$,
where $N$ is the maximal number of distinguishable states.  Again this
is positive and trace preserving on the overall tensor product of the classical
$N$-state system with the GPT system (which is equivalent in structure
to the direct sum of $N$ copies of the GPT system).  The possibility
of such transformations is due to the projectivity of the state space
(which implies such properties as the neutrality of filters).  Whether
or not it is reasonable to consider them thermodynamically costless is
less clear, especially because the overall transformation on the
GPT-classical composite is not in general an automorphism of the
normalized state space (not ``reversible'').  Ultimately, the
reasonableness of this assumption probably requires that the
``measurement record'' be kept in a system for which the overall
measurement dynamics on the composite with the original system, can be
reversible, a property which we are investigating.  (This is related
to the notion of purification used in \cite{Giulio1}, 
\cite{Giulio2}.) 

Obviously, if $\omega = \sum_i q_i \omega_i$ where $\omega_i \in F_i$,
then we have $T(1 \otimes \omega) = \sum_i q_i i \otimes \omega_i$.

At this point (or after the next step, it does not matter), we
``adiabatically'' transform the internal state $\omega_i$ of the
particle in each container $C_i$, to some fixed $i$-independent pure
state, $\omega_0$.  

{\bf Assumption} (``Adiabatic assumption''): This can be achieved 
without doing any work on the system, or exchanging any heat with the 
bath.  (Thus we could do it while the system is isolated.) 

If the reversible transformation group of the GPT system (the group of
permitted transformations that are in the automorphism group of
$\Omega$) acts transitively on the pure states, that would motivate
this assumption.  This would follow, for example, from the much
stronger property of Strong Symmetry from \cite{BMU}.

Next, we isothermally (in contact with the heat bath at temperature
$T$) compress the contents of each container $C_i$ to a volume $V_i :=
q_i V$.  The work done on container $i$, \emph{if} it contains a
particle, is $W_i = - \int_{V}^{qV} P dV$.  By the ideal gas law,
$PV=nkT$; with $n=1$, $P = kT/V$ so $W_i = - kT \int_{V}^{qV} (1/V)dV
= - kT (\ln qV - \ln V) = - kT \ln q \ln V / \ln V = - kT \ln q$.
Since the probability the container contains the particle is $q_i$, and
these events, for the various containers $i$, are mutually exclusive
and exhaustive, the expected work done in this step is $\sum_i q_i W_i
= - kT \sum_i q_i \ln q_i = S(\omega)$.

Then we put the containers of compressed volume $q_i V$ next to each
other and remove partitions separating them, obtaining a particle
whose internal GPT degree of freedom is in the pure state $\omega_0$,
in equilbrium at temperature $T$ and in the volume $\sum_i q_i V = V$.
Since this process was reversible, we see that we may go from a
particle in volume $V$ and state $\omega$, whose spatial degrees of
freedom are in equilibrium at temperature $T$, to one in spatial 
equilibrium at temperature $T$ and volume $V$ but with state $\sigma$,
by doing expected work $S(\omega) - S(\sigma)$ on the particle. \qed

%% Perhaps we could come up with another version, using Landauer's
%% principle, applied to classical measurement records, for its thermodynamic
%% accounting.  (We need to pay close attention to reversibility, though, in the 
%% Landauer's principle version.) 

Just because our assumptions imply we can do this process, does not
imply that we have a consistent thermodynamics (i.e. one without
work-extracting cycles).  It is possible that further properties of a
GPT beyond projectivity and spectrality might be necessary for this.
A notion of \emph{energy}, such as Postulate 4 (Energy Observability)
in \cite{BMU} provides, would be needed for a thermodynamics that
resembles our current thermodynamic theories, if we wish to discuss
work done by or on GPT systems.  We already mentioned the principle
that if there exists a reversible transformation (automorphism of the
normalized state space $\Omega$), it can be applied
with the system isolated or in contact with a heat bath, at no cost in
work and with no heat exchange, and used along with a further
property, that the automorphism group of $\Omega$ acts transitively on
pure states, to motivate assuming zero thermodynamic cost for a step
in the von Neumann protocol.  Perhaps we can find a similar motivation
for the Strong Symmetry axiom of \cite{BMU}.  In \cite{BMU} it was
shown that the absence of higher-order interference was equivalent,
given Weak Spectrality and Strong Symmetry, to the postulate that
filters take pure states to multiples of pure states.  This
purity-preservation property greatly constrained state spaces, giving
(when conjoined with {\bf WS} and {\bf SS}) irreducible or classical Jordan systems, and at first blush it seems
it might be important for thermodynamics.  We suspect that it is not,
and that a robust thermodynamics may be developed for GPT systems that
share many of the remarkable properties of quantum theory, but are
distinctly non-quantum in their interference behavior.

%% It's difficult to see why this and the prevention of
%% work-extracting cycles would motivate Axiom 2, though, since Axiom 2 
%% ensures that we have a lot of transformations, rather than forbidding 
%% some that might get us in trouble.

%% Also, as Marius has pointed out, the majorization results
%% of the previous section are required to rule out some kinds of
%% thermodynamic cycles.  If we have Axiom {\bf P}, plus the assumption 
%% of measurement some finegrained measurement [will need assumptions on 
%% thermodynamic costs of measurement, also Axiom P].  Could motivate
%% STP (or something weaker, if there is such a thing, that ensures
%% majorization).  

\nocite{*}
\bibliographystyle{eptcs}
\bibliography{generic}

\begin{appendix}

\section{Proof of Proposition \ref{prop: spectral expansion of observables}}
\label{appendix: spectral expansion}

In order to prove Proposition 
\ref{prop: spectral expansion of observables} we state 
(with very minor notational changes, and the
incorporation of some definitions that Alfsen and Shultz place in
surrounding text) Alfsen and Shultz' theorem of which it is a 
corollary (note that it is $V$ that corresponds to what we've been
calling $A$; $A$ corresponds to what we call $A^*$):

\begin{theorem}[\cite{ASBook}, Theorem 8.64]\label{theorem: AS spectral}
Assume $A$ and $V$ are in spectral duality, and let $a \in A$.  Then there
is a unique family $\{ e_\lambda \}_{\lambda \in \R}$ of projective units with 
associated compressions $P_\lambda$ such that 
\begin{enumerate}[(i)]
\item $e_\lambda$ is compatible with $a$ for each $\lambda \in \R$,
\item $P_\lambda a \le \lambda e_\lambda$ and $P'_\lambda \ge \lambda e'_{\lambda}$
for each $\lambda \in \R$,
\item $e_\lambda = 0$ for $\lambda < - ||a||$, and $e_\lambda = 1$ for 
$\lambda > ||a||$,
\item $e_\lambda \le e_\mu$ for $\lambda < \mu$,
\item $\bigvee_{\mu > \lambda} e_\mu = e_\lambda$ for each $\lambda \in \R$.
\end{enumerate}
The family $\{e_\lambda\}$ is given by $e_\lambda = r((a - \lambda
u)^+)$, each $e_\lambda$ is in the $\cP$-bicommutant of $a$, and the
Riemann sums 
\beq s_\gamma = \sum_{i=1}^n \lambda_i (e_{\lambda_i} -
e_{\lambda_{i-1}}) \eeq 
converge in norm to $a$ when $||\gamma||
\rightarrow 0$.  Here the $\gamma$ are finite increasing sequences
$\lambda_0, \lambda_1, ... , \lambda_n$ of elements of $\R$,
satisfying $\lambda_0 < -||a||$ and $\lambda_n > ||a||$ and the norm
$||\gamma||$ of such a sequence $\gamma$ is taken to be $\max_{i \in
  \{1,...,n\}}{\lambda_i - \lambda_{i-1}}$ (Note that $n$ depends on
$\gamma$, indeed $n \rightarrow \infty$ is necessary to secure $||\gamma|| \rightarrow 0$ given
the bounds on $\lambda_0$ and $\lambda_n$.)
\end{theorem}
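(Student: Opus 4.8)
The plan is to construct the spectral family explicitly and then verify that it has the five characterizing properties, is determined uniquely by them, and has Riemann sums converging in norm to $a$. I would take $e_\lambda$ to be the spectral cut at $\lambda$ --- the projective unit supported on the region where $a$ does not exceed $\lambda$ --- which under spectral duality is built from the carrier (range projective unit) $r((a-\lambda u)^+)$ of the positive part of $a-\lambda u$ together with its orthocomplement. The ingredients I would rely on are the standard consequences of spectral duality from \cite{ASBook}, Ch.~8: every $b \in A$ admits an orthogonal decomposition $b = b^+ - b^-$ into positive and negative parts with orthogonal carriers; every $b \ge 0$ has a well-defined carrier $r(b)$, the least projective unit fixed by its associated compression; and the projective units form a complete orthomodular lattice (the general-setting analogue of Proposition~\ref{prop: orthomodular}) closed under the monotone joins appearing in (v).

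First I would dispose of the purely order-theoretic items. For $\lambda$ outside $[-\|a\|,\|a\|]$ the element $a-\lambda u$ has a definite sign, so the cut collapses to one of the two extreme projective units, giving (iii). Since $\lambda \mapsto (a-\lambda u)^+$ is monotone and the carrier map $r(\cdot)$ is order preserving, the cuts nest, yielding the monotonicity (iv). The upper-continuity relation (v) then follows from the compatibility of $r(\cdot)$ with directed monotone limits in the complete lattice of projective units.

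The technical core, and the step I expect to be the main obstacle, is (i) together with (ii): that each $e_\lambda$ lies in the $\cP$-bicommutant of $a$ --- equivalently that $a$ is compatible with the cut, $a = P_\lambda a + P_\lambda' a$ --- and that the associated compression sandwiches $a$, namely $P_\lambda a \le \lambda e_\lambda$ and $P_\lambda' a \ge \lambda e_\lambda'$. Compatibility is the assertion that a carrier built out of $a$ genuinely commutes with $a$ in the compression sense; I would prove it by exhibiting $e_\lambda$ as a limit of projective units already known to lie in the bicommutant of $a$ and using that the bicommutant is closed. For the sandwich inequalities the decisive fact is the orthogonality of $(a-\lambda u)^+$ and $(a-\lambda u)^-$: the compression $P_\lambda$ onto the support of the part where $a \le \lambda$ annihilates the positive part of $a-\lambda u$ and returns an element that is $\le 0$, which rearranges to $P_\lambda a \le \lambda e_\lambda$, the complementary inequality being symmetric. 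This is exactly where the full force of spectral duality, as opposed to mere projectivity, is required, since it is what guarantees the clean splitting of $a-\lambda u$ along orthogonal faces.

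Finally, for norm convergence of the Riemann sums I would difference the inequalities (ii) across each subinterval $[\lambda_{i-1},\lambda_i]$ of a partition $\gamma$, obtaining $\lambda_{i-1}(e_{\lambda_i}-e_{\lambda_{i-1}}) \le (P_{\lambda_i}-P_{\lambda_{i-1}})a \le \lambda_i(e_{\lambda_i}-e_{\lambda_{i-1}})$, so that the band $(P_{\lambda_i}-P_{\lambda_{i-1}})a$ differs from its Riemann contribution $\lambda_i(e_{\lambda_i}-e_{\lambda_{i-1}})$ by at most $\|\gamma\|$ in norm. Summing over the orthogonal bands and using (iii) to absorb the two ends then yields $\|s_\gamma - a\| \le \|\gamma\|$, whence $s_\gamma \to a$ as the mesh $\|\gamma\| \to 0$. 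Uniqueness would follow by noting that properties (ii) and (v) pin down any admissible family as exactly these cuts for every real $\lambda$, so two families satisfying (i)--(v) must agree.
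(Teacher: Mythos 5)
The paper does not actually prove this statement: it is quoted (as Theorem 8.64 of \cite{ASBook}) and used as a black box in Appendix \ref{appendix: spectral expansion} to derive Proposition \ref{prop: spectral expansion of observables}, so there is no in-paper proof to compare yours against. What follows therefore assesses your outline against the textbook argument it is reconstructing. Your construction is the right one, and it silently corrects two slips in the quoted statement: taking $e_\lambda$ to be the \emph{orthocomplement} $r((a-\lambda u)^+)'$ of the carrier of the positive part is what makes item (iii) come out correctly (the displayed formula without the prime would give $e_\lambda = 1$ for $\lambda < -\|a\|$), and your reading of (v) as continuity from above only makes sense if the join $\bigvee_{\mu>\lambda}$ is really the meet $\bigwedge_{\mu>\lambda}$, as it is in \cite{ASBook} (with the join, (iv) and (v) together would force $e_\mu$ to be constant above $\lambda$). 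Items (iii) and (iv) via sign-definiteness and monotonicity of the carrier map, and the norm estimate $\|s_\gamma - a\| \le \|\gamma\|$ obtained by telescoping the inequalities of (ii) over the orthogonal bands of the partition, are exactly the standard steps.

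Two pieces, however, remain assertions rather than proofs. The compatibility claim in (i) --- that $e_\lambda$ lies in the $\cP$-bicommutant of $a$ --- is where the full strength of spectral duality enters, and ``exhibit $e_\lambda$ as a limit of projective units already known to lie in the bicommutant'' begs the question of which projective units those are and of why the decomposition $a-\lambda u = (a-\lambda u)^+ - (a-\lambda u)^-$ into orthogonally supported, mutually compatible parts exists at all; in \cite{ASBook} this is itself the content of a sequence of preparatory lemmas and is not a formal consequence of projectivity alone, which is precisely why spectral duality has to be assumed. Likewise uniqueness is dispatched in one sentence, whereas the actual argument must use (ii) to trap any admissible $e_\lambda$ between $r((a-\lambda u)^+)'$ and the corresponding cut from above, with (v) resolving the ambiguity at points of the spectrum. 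Neither gap affects the paper, which imports the theorem rather than proving it; but as a standalone proof your text is an accurate outline of the right argument, not yet the argument itself.
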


\noindent
{\bf Proof of Proposition \ref{prop: spectral expansion of observables}:} 
As already mentioned, in finite dimension spectral
duality just means that the system is projective.  
Denoting by $F_p$ is the face of the cone $A$ generated by $p$, 
$q < p$ for projective units
$q$ and $p$ implies that $\lin F_q$ is a proper subspace of $\lin F_p$.  
Hence any
chain $0 =e_0 < e_1 < e_2 < e_3 < \cdots e_n= 1$ of projective units has finite
length no greater than one plus the dimension $d$ of $A$.  Thus, the family
$\{e_\lambda\}$ contains only a finite number $n \le d+1$ of distinct
projective units, which we index in increasing order as $e_0=0 < e_1 <
\cdots e_n =1$.  (Whenever we write $e$ with a roman index, the index
indexes this set; the expression does \emph{not} (except perhaps
accidentally) refer to $e_\lambda$ for the real number $\lambda = i$.)

Consider the sets $S_i := \{\lambda: \forall \mu \ge \lambda, e_{\mu}
\ge e_i \}$.  Each of these is an up-set in the ordering $\le$ of
$\R$, so it is either a closed or open half-line $[\mu_i , \infty)$ or
  $(\mu_i , \infty)$ unless $i=0$ in which case it is $\R$.  It is in
  fact closed: if it were open, $S_i = (\mu_i, \infty) \equiv
  \{\lambda: \lambda > \mu_i\}$, then by (v) of Theorem \ref{theorem:
    AS spectral}, $e_{\mu_i} = \bigvee_{\lambda \in S_i} e_\lambda$,
  so by the definition of $S_i$, $\mu_i \in S_i$, contradicting 
$S_i = (\mu_i , \infty)$.

Consequently the function $\R \rightarrow \{e_0,...,e_n\}$ which 
maps $\mu$ to $e_\mu$ is a sort of step-function; there are $n$ distinct 
 real numbers 
$\mu_1,...,\mu_n$ such that the preimage of $e_0$ is $(-\infty, \mu_1)$, 
the preimage of $e_j$ for $1<j<n$ is $[\mu_j , \mu_{j+1})$, and the 
preimage of $e_n$ is $[ \mu_n , \infty)$.  Let $\theta$ be the length 
of the shortest of these intervals; we have $0 < \theta < 2||a||/n$
[the only important thing about this seems to be that $\theta > 0$].  

Since $\gamma$ is an increasing sequence, by (iv) we have 
$e_{\lambda_i} - e_{\lambda_{i-1}} \ge 0$.  There are $n$ such
differences in the Riemann sum; at most $n \le d$ of them are nonzero;
we call them $p_1,...,p_n$.
Since $\sum_i p_i = 1$, by Proposition 8.8 of \cite{ASBook}, 
the nonzero ones are mutually orthogonal.    

All sequences $\gamma$ with $||\gamma|| < \theta$ have the
same finite set of nonzero differences 
$p_i := e_{\lambda_i} - e_{\lambda_{i-1}}$, 
of cardinality $n \le d$.   For such $\gamma$, 
the Riemann sums $s_\gamma$ lie in the finite-dimensional
subspace of $A$ spanned by the $p_i$.  Like all subspaces of a finite-dimensional 
vector space, it is closed.  
Hence $\lim_{||\gamma|| \rightarrow 0}$ lies in this 
%%compact set
subspace, so it, too, is a finite linear combination of 
mutually orthogonal projective units.  
Since the family $\{e_\lambda\}_{\lambda \in \R}$ was unique, so is this linear combination.       
\qed

\section{Proof of Theorem \ref{theorem: projective stp is perfection}}
\label{appendix: projective stp is perfection}

There exists a basis $\{w_i \}$ for $A^*$ consisting of atoms; there
is a unique linear map  $\phi: A^* \rightarrow A$ that agrees with 
the map $x \mapsto \hat{x}$ on this basis.  We need to show that 
this agrees with $x \mapsto \hat{x}$ more generally, or what is the 
same thing, that it is independent of the choice of atomic basis; and
also that it is positive.    

Let $x_1,...,x_n \in A^*$ be atoms, and $\lambda_1,...,\lambda_n \in \R$.  
By STP, for each atom $y$
\beq \label{eq: an equation}
\langle y , \sum_{i=1}^n \lambda_i \hat{x}_i \rangle = 
\langle \sum_{i=1}^n \lambda_i x_i , \hat{y} \rangle .
\eeq
So if 
$\sum_i \lambda_i x_i = 0$, then $\sum_i \lambda_i \hat{x}_i = 0$, where
$x_i$ are any atoms.  Now let $x$ be an arbitrary atom, and expand it
in the basis of atoms $w_i$:  $x = \sum_i \alpha_i w_i$.  So 
$x - \sum_i \alpha_i w_i = 0$, so $\hat{x} = \sum_i \alpha_i \hat{w}_i
\equiv \sum_i \alpha_i \phi(w_i) = \phi(x)$.  Since $\phi$ 
takes the 
set of all atomic effects (which generate the cone $A^*_+$) 
to the set of all extreme points of $\Omega$ (which generate the cone 
$A_+$), it takes $A^*_+$ onto $A_+$, so it is an order-isomorphism.
By $\phi$'s linearity the form $(. ,. ) := \langle . , \phi(.) \rangle$ 
is bilinear.  Since an order-isomorphism is in particular 
an isomorphism of linear spaces, the form $(.,.)$ is nondegenerate.    
That 
it is symmetric is
easy to see from STP:  for arbitrary $a = \sum_i a_i w_i$, 
$b = \sum_j b_j w_j$, 
\beq
(a,b) = \sum_{ij} a_i b_j \langle w_i , \phi(w_j) \rangle
\eeq
but since $w_i, w_j$ are atoms, $\phi(w_j) = \hat{w}_j$, and by 
STP $\langle w_i , \hat{w}_j \rangle = \langle w_j , \hat{w}_i \rangle
= \langle w_j , \phi(w_i) \rangle$, 
we have
\beq
(a,b) = \sum_{ij} a_i b_j \langle w_j , \phi(w_i) \rangle
= \langle \sum_j b_j w_j , \sum_i a_i \phi(w_i) \rangle = (b,a)\;.
\eeq

To establish that $(. ,. )$ is an inner product, it remains to be shown 
that $(x, x) \ge 0$ for all $x \in A^*$.  To see this, use the spectral 
expansion $x = \sum_i \lambda_i p_i$, $\lambda_i \in \R$, $p_i$ mutually
orthogonal atoms, afforded by Proposition 
\ref{prop: spectral expansion of observables}.  Then 
\beqa
(x,x) = \sum_{ij} \lambda_{i} \lambda_j \langle p_i , \hat{p_j} \rangle
= \sum_{ij} \lambda_i \lambda_j \delta_{ij} = \sum_i \lambda_i^2 \ge 0. 
\eeqa
 
Since $\phi$ is an order-isomorphism between $A^*$ and $A$, and we have
just established that the corresponding bilinear form is an inner product, 
we have shown that $A^*$ (equivalently, $A$) is self-dual.  
   
To show that any compression $P$ is symmetric 
with respect to the form, we first establish that
\beq\label{eq: intermediate thing}
( (I - P)x , Py) = 0
\eeq
where $x,y$ are atoms.
Write the spectral expansion of $Py$, 
 $Py = \sum_i \lambda_i y_i$, with $y_i$ mutually orthogonal
atoms in $\im_+ P$ and $\lambda_i \in \R$.  Note that 
$w_i \in \im P \implies \hat{w_i} \in \im P^*$. 
(To see this (which is Eq. 9.10 in \cite{ASBook}), note that by 
the facts that compressions are normalized and positive, and the dual of 
a positive map is positive on the dual cone, for any
atom $w$ in $\im_+ P$, $P^*w = \lambda \omega$ for some $\lambda 
\in [0,1]$.  So $1 = \langle w, \hat{w} \rangle = 
\langle Pw , w \rangle = \langle w, P^* w \rangle = \lambda 
\langle w , \omega \rangle \le 1$, whence $\lambda = 1$ and
$\omega = \hat{y}$.) So 
\beqa
((I - P)x , Py) = \sum_i \lambda_i ((I - P)x , y_i)  \\
= \sum_i \lambda_i \langle (I-P) x , \hat{y_i} \rangle 
=  \sum_i \lambda_i \langle x, (I - P)^* \hat{y_i} \rangle =0.  
\eeqa
The last equality is because $\hat{y_i} \in \im P^*$.  

Now $(x , Py) \equiv ((I-P + P) x , P y) 
= ((I - P) x, Py) + (Px , Py)$, but by 
(\ref{eq: intermediate thing}), this is equal to $(Px , Py)$. 
Interchanging $x$ and $y$, we have $(y, Px) = (Py, Px)$.  But 
using symmetry of $(. ,. )$ twice: $(Px, y) = (y, Px)$ and
$(Py, Px) = (Px, Py)$ we obtain, as claimed,
$(x, Py) = (Px , y)$.

Since a symmetric projection in a real inner product space is an
orthogonal projection, we see that the
compressions on $A^*$, equivalently (when $A$ is identified with $A^*$
via $\phi$) the filters on $A$, are orthogonal projections with
respect to the self-dualizing inner product $( . , . )$.  So our cone
is perfect by a result of Iochum \cite{IochumThesis, IochumBook}: that
a self-dual cone is perfect if and only if the orthogonal (in the
self-dualizing inner product) projection $P_F$ onto the the linear
span of $F$ is positive for each face $F$.\footnote{A proof may be
  found in Appendix A of \cite{BMU}. }  \qed

\end{appendix}

\end{document}